\documentclass{amsart}
\usepackage[english]{babel}
\usepackage{bbm}
\usepackage{graphics}
\usepackage{hyperref}

\usepackage{tikz}
\usetikzlibrary{patterns}
\usepackage{mathtools}

\newcommand{\markedN}[1]{\ensuremath{\overline{\cal N}_{#1}}}

\newcommand\coloneq{\stackrel{\text{\rm def.}}{=}} %\mathrel{:=}}
\newcommand{\steq}[1]{\stackrel{\text{\rm #1.}}{=}} %\mathrel{:=}}

\newcommand{\N}{\ensuremath{\mathbb{N}}}
\newcommand{\R}{\ensuremath{\mathbb{R}}}

\renewcommand{\P}{\ensuremath{\mathbb{P}}}

\newcommand{\ind}[1]{\ensuremath{\mathbbm{1}_{\{#1\}}}}
\newcommand{\diff}{\mathop{}\mathopen{}\mathrm{d}}
\newcommand{\cal}[1]{\ensuremath{\mathcal{#1}}}

\def\eps{\varepsilon}

\setcounter{tocdepth}{1}

\newtheorem{proposition}{Proposition}

\newtheorem{corollary}{Corollary}

\newtheorem{theorem}{Theorem}

\title{A Stochastic Analysis of a Network with Two Levels of Service}

\author{Vianney B{\oe}uf}
\address[V. B{\oe}uf]{\'Ecole des Ponts ParisTech, INRIA and CMAP, \'Ecole Polytechnique,
CNRS, Brigade de Sapeurs-Pompiers de Paris. CMAP, \'Ecole Polytechnique, 91128 Palaiseau Cedex, France}
\email{Vianney.Boeuf@inria.fr}
\urladdr{http://www.cmap.polytechnique.fr/~boeuf}
\author{Philippe Robert}
\address[Ph. Robert]{INRIA, 2 rue Simone Iff, CS 42112, 75589 Paris Cedex 12, France}
\email{Philippe.Robert@inria.fr}
\urladdr{http://team.inria.fr/rap/robert}

\date{\today}
\begin{document}

\begin{abstract}
In this paper a stochastic model of a call center with a two-level architecture is analyzed. A first-level pool of operators answers calls, identifies, and handles non-urgent calls. A call classified as urgent has to be transferred  to specialized operators at the second level. When the operators of the second level are all busy, the operator of first level handling the urgent call is blocked  until  an operator at the second level is available. Under a scaling assumption, the evolution of the number of urgent calls blocked at level~$1$ is investigated. It is shown that if the ratio of the number of operators at level $2$  and~$1$ is greater than some threshold, then, essentially, the system operates without congestion, with probability close to $1$, no urgent call is blocked after some finite time.  Otherwise, we prove that a positive fraction of the operators of the first level are blocked due to the congestion of the second level. Stochastic calculus with Poisson processes, coupling arguments and formulations in terms of Skorokhod problems are the main mathematical tools  to establish these convergence results.

\end{abstract}

\maketitle

\hrule

\vspace{-3mm}

\tableofcontents

\vspace{-10mm}

\hrule

\bigskip

%!TEX root = ./main.tex
\section{Introduction}
The motivation of the model analyzed in this paper originates from a collaboration with  ``Pr\'efecture de police de Paris'', the police department of Paris, and ``Brigade de sapeurs-pompiers de Paris'', the fire department of Paris, to design  an emergency call center in charge of receiving emergency calls for police {\em and} for firemen in Paris area.  The previous organization had two independent call centers with a single level of operators. The new call center  has an architecture with  two levels of operators.  A first-level pool of operators  handles (numerous)  non-urgent calls and has to detect and  transfer calls classified as  urgent to a  second-level pool of more specialized operators,  policemen or firemen, depending on the nature of the call.  Second level operators may dispatch emergency means, if needed.  The first level pool operates therefore as a filter so that the second-level pool can process efficiently urgent calls. An additional, natural, constraint is that if a first level operator has detected an urgent call,  this  operator releases the call only  when  a second level operator has handled it. In particular, the operator will wait when all servers of the second level are busy. In this situation there are two issues: firstly, the handling of the urgent call is delayed and, secondly, the server of the first level is blocked  and, consequently, the processing capacity of the first level is reduced. The main problem in the design of this new organization is of determining a minimal number of (expensive) second level operators  necessary so that this blocking phenomenon  has a small probability.

We will investigate the behavior of this architecture in stressed situations, i.e.,  when a large number of incoming calls is arriving at the first level.  A key characteristic to analyze in this situation is the evolution of the number of blocked operators at level~$1$. This number should remain small in a convenient design. For this  reason, it will be assumed that an infinite number of calls are waiting for processing in a queue.  Calls require random processing time whose distribution depends on the level and the class of the call (urgent or non-urgent). We now give a quick description of this system in terms of a queueing model.

\subsection*{A Queueing Description of the System}
As input, there is an infinite queue of jobs waiting to enter the system, this is the saturation assumption mentioned above. With probability $p{\in}[0,1]$ a job is of class~$0$, otherwise it is of class~$1$. 
A job of class~$0$ represents an urgent call, otherwise it is a non-urgent call.
\begin{enumerate}
\item The first level has $C_1$ servers. \\
  Every time a server of this level is idle, it immediately receives a job from the infinite queue. It is of class $i{\in}\{0{,}1\}$ with probability $p{\in}[0{,}1]$ and $1{-}p$, respectively. A job of class $i$ requires an exponentially distributed service with rate $\mu_{i1}$ at this level.

  \noindent
 Class~$0$ jobs are urgent calls and have to be processed by level~$2$.
 \begin{enumerate}
 \item  When a job of class $0$ completes its service at level~$1$, it goes to the second level if there is at least one idle server there.
   \item If there is no place then it remains at the first level and, consequently, blocks a server at this level.  As soon as a job leaves the second level, a blocked job at the first level is sent to the second level and the server can take a new job in the infinite queue.
    \end{enumerate}
  When a job of class $1$ completes its service, it leaves the system.
\item The second level has $C_2$ servers and receives only class~$0$ jobs.  A job at this level requires a exponentially distributed service with rate $\mu_{02}$.
\end{enumerate}
See Figure~\ref{Fig1}.
A key feature of this network is that blocked jobs of class $0$ at level $1$ reduce the capacity of the system since the corresponding servers at level $1$ cannot process the calls waiting in the saturated queue.

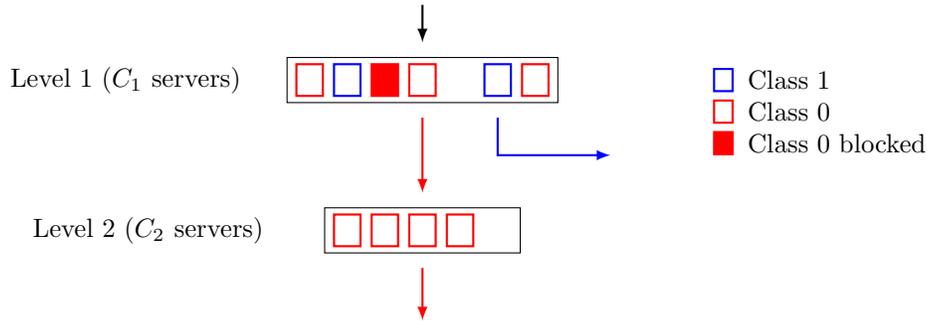
\begin{figure}[ht]
%!TEX root = ./main.tex
\begin{tikzpicture}
\tikzset{job/.style={rectangle,thick,minimum width=10pt,minimum height=12pt}}
\begin{scope}
    \node[anchor=east] at (-2.3,0) {Level 1 ($C_1$ servers)};
    \draw (-1.8,-.3) rectangle (1.8,.3);
    \node [job,draw=red] at (-1.5,0) {};

    \node [job,draw=blue] at (-1,0) {};
    \node [job,draw=red,fill=red] at (-0.5,0) {};
    \node [job,draw=red] at (0,0) {};
    \node [job,draw=blue] at (1,0) {};
    \node [job,draw=red] at (1.5,0) {};
\end{scope}

\begin{scope}[shift={(0,-2)}]
    \node[anchor=east] at (-2,0) {Level 2 ($C_2$ servers)};
    \draw (-1.3,-.3) rectangle (1.3,.3);
    \node [job,draw=red] at (-1,0) {};
    \node[job,draw=red] at (-0.5,0) {};
    \node[job,draw=red] at (0,0) {};
    \node [job,draw=red] at (.5,0) {};
\end{scope}

\draw[->,>=latex,red,thick] (0,- 0.5) -- (0,-1.5);
\draw[<-,>=latex,thick] (0,.5) -- (0,1);
\draw[->,>=latex,blue,thick] (1,-0.5) -- (1,-1) -- (2.5,-1);
\draw[->,>=latex,red,thick] (0,-2.5) -- (0,-3.2);

\begin{scope}[shift={(4,0)},scale=.7]
    \node[job,draw=blue,scale=0.7] (leg1) at (0,0) {};
    \node[job,draw=red,scale=0.7] (leg2) at (0,-.6) {};
    \node[job,draw=red,fill=red,scale=0.7] (leg3) at (0,-1.2) {};
    \node[anchor=west,shift={(0.2,0)}] at (leg1) {Class 1};
    \node[anchor=west,shift={(0.2,0)}] at (leg2) {Class 0};
    \node[anchor=west,shift={(0.2,0)}] at (leg3) {Class 0 blocked};
\end{scope}
\end{tikzpicture}
\caption{Queueing System with Two Levels}\label{Fig1}
\end{figure}

\subsection*{Literature}
\subsubsection*{Deterministic Modeling}
In this paper, the classes of calls and their processing times are assumed to be random. In a non-random setting, some aspects of this system have been investigated in Allamigeon et al.~\cite{Abg2015,Abg2016} where a performance analysis was carried out using a deterministic Petri net modeling.  A Petri net is a language describing systems in which resources circulate from place to place, incurring concurrency, synchronizations and bifurcations~\cite{murata1989petri,best2013structure}.  The dynamics of a Petri net can be translated into a dynamical system, whose stability and stable points can be analyzed, see Cohen et al.~\cite{cohen1995asymptotic}. In Allamigeon et al.~\cite{Abg2015,Abg2016}, a simplified model of the emergency call center is investigated.  Computations on the stationary regimes of the dynamical system have shown a phase transition characterizing the different levels of congestion of the call center.  The threshold is a critical ratio between the number of operators at level~$1$ and level~$2$.

The analysis of Petri net models may give general results for this class of systems in a deterministic framework.  However, the dynamics investigated in the above articles do not take account of the random nature  of the delays or the  classes of calls in the call center for example.  In contrast, the queueing network analysis adopted in the present article focuses on a simpler system describing the transfer or the blocking of calls from level~1 to level~2. As it will be seen, it provides a deeper understanding of the behavior of this system in a random context.

It should be noted that the dynamical system obtained below in our results as the scaling limit of our model do correspond to the differential equations and the multi-phase behavior resulting from the continuous Petri net modeling of Allamigeon et al.~\cite{Abg2016}, hence highlighting the consistency and strong relationship between both analyses.

\subsubsection*{Queueing Models of Blocking Phenomena}
A natural class of stochastic models related to the system described above is that of call centers. There is a huge literature dealing with the problem of staffing  these systems. To the best of our knowledge, few seem to have considered jobs  going through a series of call centers as in our case. The closest models of this literature seem to be multi-skill call centers where jobs can have different levels of quality of service depending on the call center chosen. They are nevertheless addressing quite different problems than the ones considered in this paper. See Koole and Mandelbaum~\cite{Koole} for a survey.

The model that we are studying can be described in terms of finite capacity queues with blocking in tandem. The blocking has the effect that, when a server at level~$1$ completes the service of a class~$0$ job, it cannot be used again until a server at level~$2$ is available. At level~$1$ a fraction of the servers, and consequently the corresponding calls, may be blocked. Related models have been investigated in the literature, see the survey Balsamo~\cite{Balsamo}.  The papers study  the corresponding finite Markovian models of these systems  to express in particular the blocking  probability at equilibrium.  The corresponding equilibrium equations  do not have, in general,  a solution with a closed form expression.  When the values of the capacities (the numbers of operators) are not small, the dimension of the state space can be quite large so that a numerical procedure can also be out of reach in practice. Some approximations  have been proposed but, for the moment, without any convergence result which could give an idea of the accuracy of such estimations. Kelly~\cite{Kelly} has investigated the problem of blocking of a series of queues, the analysis is concentrated on the estimation, via bounds, of the achievable throughput of such a system. To conclude, the literature of rigorous mathematical results for  finite capacity queues with blocking is therefore somewhat scarce.

When the blocking is replaced by the following mechanism defined as an exclusion process:  a job  blocked at some stage immediately repeats a service until the next stage can accommodate it. Some of the mathematical models related to the asymmetric simple exclusion process can give some insights on the performances of these systems. Due to its relative mathematical tractability, the literature investigating these processes is also huge. See, for example Liggett~\cite{BLiggett} for a general presentation of these important processes and Liggett~\cite{Liggett} for a study of  asymmetric simple exclusion process in finite dimension. These models are however quite different and do not seem to be usable since blocking phenomenon of interest is not really taken into account.

\subsection*{Contributions}
With the above notations for our system, one of the main results of the paper,   Theorem~\ref{theostab},  shows that, under appropriate scaling conditions, if  $r $ is the ratio of the capacities of the two levels, $r{=}{C_2}/{C_1}$,   then  the  condition
\begin{equation}\label{eq:cond1}
r\left(\frac{p}{\mu_{01}}+\frac{1{-}p}{\mu_{11}}\right)>\frac{p}{\mu_{02}}
\end{equation}
implies that there exists some fixed instant independent of the initial state such that after that time, with high probability, there are no blocked customers at level~$1$ on any finite time interval. See Theorem~\ref{theostab} and Corollary~\ref{thm:y-star-is-null}.

Otherwise, if the opposite (strict) inequality
\begin{equation}\label{eq:cond2}
r\left(\frac{p}{\mu_{01}}+\frac{1{-}p}{\mu_{11}}\right)<\frac{p}{\mu_{02}}
\end{equation}
holds then, Theorem~\ref{theoinstab}  shows that,  under appropriate scaling conditions,  the fraction of blocked customers at level~$1$ is positive after some time almost surely and it converges to
\[
 1 -  \dfrac{\mu_{02}}{\mu_{01}}\frac{C_2}{C_1} \left( \dfrac{(1-p)\mu_{01}}{p \mu_{11}} + 1 \right).
 \]
 See also Corollary~\ref{thm:z-is-null}.
 
Consequently, as the intuition suggests, if the ratio $C_2/C_1$ of the number  of servers  is larger than some parameter then the phenomenon of blocking will not occur with high probability.  Relation~\eqref{eq:cond1} gives  therefore a rule for a convenient design of such a system.

\subsubsection*{A Heuristic Picture} Assume that there is no blocking at level~$1$ of class~$0$ jobs. Level~$1$ can be seen as a simple birth and death process described by the number of jobs $(Q(t))$ of class $0$. A birth (resp.\ death) occurs when a job of class~$1$ (resp.\ $0$)  completing its service is replaced by a job of class~$0$ (resp.\ $1$). Therefore in state $x{\in}\{0,\ldots,C_1\}$, the birth rate is  $p(C_1-x)\mu_{11}$ and the death rate is $(1-p)\mu_{01}x$.
At equilibrium these two rates should be of the same order and therefore that, for a large $C_1$, the number $Q_0$ of class~$0$ jobs is of the order of
\[
Q_0\sim C_1\frac{p \mu_{11}}{(1-p)\mu_{01}+p\mu_{11}}.
\]
To avoid congestion, the rate $\mu_{01}Q_0$ at which class~$0$ jobs enter level $2$ must be smaller than the maximal output rate of the second level, that is $C_2\mu_{02}$. This gives exactly Condition~\eqref{eq:cond1}.

\subsubsection*{Mathematical Aspects}
Proving rigorously  these intuitive results turns out to be, quite surprisingly,  challenging.  The Markov process associated with the queueing system has a finite state space of $\N^3$. It bears some similarity with classical loss networks of the literature  but with a routing mechanism as in Jackson networks. See Kelly~\cite{BKelly}.  As such, little can be said for this process, in particular its invariant probability distribution does not  seem to have a simple closed form expression. 

To get quantitative results on this system a scaling approach is used.  It is assumed that the capacities $C_1$ and $C_2$ are both large so that $C_2/C_1$ is close to some fixed constant $r{>}0$. In this framework one investigates convergence results concerning stochastic processes indexed by the scaling parameter $C_1$.  The main technical difficulties lie in the behavior of the processes at the boundaries of the state space, when there are no blocked customers at level~$1$ or when there are no idle servers at level~$2$.  As always with processes behaving locally as random walks, getting convergence results of scaled process in this context with two boundaries may be difficult. This situation has some similarities with the reflected random walks associated with classical queueing networks where the convergence results can be, sometimes, obtained by using a Skorokhod problem formulation. See Harrison and Reiman~\cite{Harrison}, Chen and Mandelbaum~\cite{Chen} or Section~9.4 of Robert~\cite{Robert} for example. There is no such global Skorokhod problem formulation for our model. An additional difficulty is the dependence on the scaling parameter of the location of time intervals where blocking (or no-blocking) occurs eventually.

To handle this complicated setting, we introduce two auxiliary processes which are first separately investigated in Section~\ref{sec:auxiliary-first}, for each of them, only one of the boundary conditions is involved. A generalized Skorokhod problem formulation is used in both cases. The final Section~\ref{sec:first} establishes the main convergence results.   Stochastic calculus with Poisson processes, coupling arguments and the results obtained on auxiliary processes are the main ingredients of the proofs. See the proof of Proposition~\ref{thm:yplusystar} for example. 
%!TEX root = ./main.tex
\section{The Stochastic Model}\label{sec:init}
To analyze the stability properties of this network, it will be assumed that the capacities of the two levels of service are large, proportional to a scaling parameter $N$.  Qualitative and quantitative properties of the system when $N$ gets large will be obtained. In particular we will determine the conditions on the parameters for which the blocking probability is negligible or not. We begin with a brief reminder on Poisson processes and some notations used in this domain.

\subsection*{Notations for Poisson processes}
Throughout the paper, for $\xi{>}0$, one denotes by  ${\cal N}_{\xi}{=}(t_n)$ a Poisson point  process on $\R_+$ with rate $\xi$ and $({\cal N}_{\xi,i})$ denotes a sequence of i.i.d.\ such Poisson processes. In the following, we will use at some occasions the following coupling of Poisson processes, for $0{<}\alpha{\leq}\beta$, one can construct a version of ${\cal N}_{\alpha}$ and ${\cal N}_{\beta}$ such that, for all $0{\leq}s{\leq}t$,
\[
{\cal N}_{\alpha}([s,t])\steq{def}\int_s^t {\cal N}_{\alpha}(\diff s)\leq {\cal N}_{\beta}([s,t]).
\]
This can be done in the following way. If ${\cal P}$ is a Poisson process on $\R_2^+$ whose intensity measure is Lebesgue on this space, then for $\xi{\in}\{\alpha,\beta\}$, the order relation will hold if we take
\[
{\cal N}_{\xi}(\diff t)={\cal P}([0,\xi]\times \diff t).
\]

The notation $\markedN{\xi}{=}(t_n,B_n)$ is for a marked point Poisson process on $\R_+{\times}\{0{,}1\}$, where $(t_n)$ is a Poisson process with rate $\xi$ on $\R_+$ and  $(B_n)$ is an i.i.d.\ sequence of Bernoulli random variables with parameter $p$.   If $f$ is some positive Borelian function on $\R_+{\times}\{0{,}1\}$, we will use the (usual) notation
\[
\int f(t,b),\markedN{\xi}(\diff s,\diff b)=\sum_{n\geq 1} f(t_n,B_n),
\]
$(\markedN{\xi,i})$ denotes a sequence  of such i.i.d.\ marked point Poisson processes.  Concerning marked point Poisson processes see Kingman~\cite{Kingman} for example. They can be interpreted as follows in our case, if $\xi{\in}\{\mu_{01}{,}\mu_{11}{,}\mu_{02}\}$,  $u{\in}\{0{,}1\}$ and the quantity $\markedN{\xi,i}(\diff t{,}\{u\})$ is not $0$, then a completion of a service occurs at time $t$, and if a new job enters the first level at this occasion, $u$ is the class of this job. Clearly the point process ${\cal N}_{\xi}(\diff t)$ has the same distribution as $\markedN{\xi}(\diff t{,}\{0{,}1\})$.

\subsection*{Scaling}
The capacities $C_1$ and $C_2$ of levels $1$ and $2$ depend on a scaling parameter $N$, $C_1{=}C_1^N{=}N$  and $C_2{=}C_2^N$ such that the convergence
\begin{equation}\label{Scaling}
\lim_{N\to+\infty}{C_2^N}/{C_1^N}=r
\end{equation}
holds for some $r{>}0$.

The evolution of the state of this system can described by the stochastic process $(X^N(t))\coloneq(Y_*^N(t),Y^{N}(t),Z^N(t))$ with, for $t{\geq}0$,
\begin{itemize}
  \item $(Y_*^N(t))$ being the number of class~$0$ jobs blocked at level~$1$ at time $t$,
\item $(Y^N(t))$, the number of class~$0$ jobs being served at level~$1$,
\item $(Z^N(t))$,  the number of idle servers at level~$2$.
\end{itemize}
For $t{\geq}0$, remark that at least one of the variables $Y_*^N(t)$ or $Z^N(t)$ is null.
It is not difficult to see that $(X^N(t))$ is  an irreducible Markov process on the state space
\[
{\cal S}_N\coloneq \left\{x=(y_*,y,z)\in\N^3: y+y_*\leq C_1^N, z\leq C_2^N, y_*\!\!\cdot z=0\right\}
\]
It will be assumed that the sequence of initial states satisfies the relation
\begin{equation}\label{ScalingInit}
\lim_{N\to+\infty}\frac{1}{N}(Y_*^N(0),Y^{N}(0),Z^N(0)){=}x_0{=}(y_{*0},y_0,z_0){\in}[0,1]^2{\times}[0,r].
\end{equation}
The vector $x_0$ will be referred to as the initial fluid  state in the following. 
The transition rates are defined as follows, for $x{=}(y_*,y,z){\in}{\cal S}$,
\begin{equation} \label{eq:Q_rates}
x\mapsto
\begin{cases}
(y_*{+}1,y{-}1,0) &\text{at rate } \quad \mu_{01}y \ind{z=0},\\
(0,y{-}1,z{-}1) &\text{"\phantom{  rate } } \quad \mu_{01}y(1{-}p) \ind{z>0},\\
(0,y,z{-}1) &\text{"\phantom{  rate } } \quad \mu_{01}yp\ind{z>0},\\
(y_*,y{+}1,z) &\text{"\phantom{  rate } } \quad \mu_{11} p(N{-}y_*{-}y),\\
(y_*{-}1,y,z) &\text{"\phantom{  rate } } \quad  \displaystyle (1{-}p)\mu_{02}C_2^N\ind{y_*>0},\\
(y_*{-}1,y{+}1,z) &\text{"\phantom{  rate } } \quad  \displaystyle p\mu_{02}C_2^N\ind{y_*>0},\\
(0,y,z{+}1) &\text{"\phantom{  rate } } \quad  \mu_{02}(C_2^N{-}z)\ind{y_*=0}.
\end{cases}
\end{equation}
Due to the constraints on the coordinates $y_*$ and $z$ of $x$ (at least one of them is $0$), the Markov process $(X^N(t))$ can be seen as a two-dimensional process as depicted in Figure~\ref{fig:domain}.

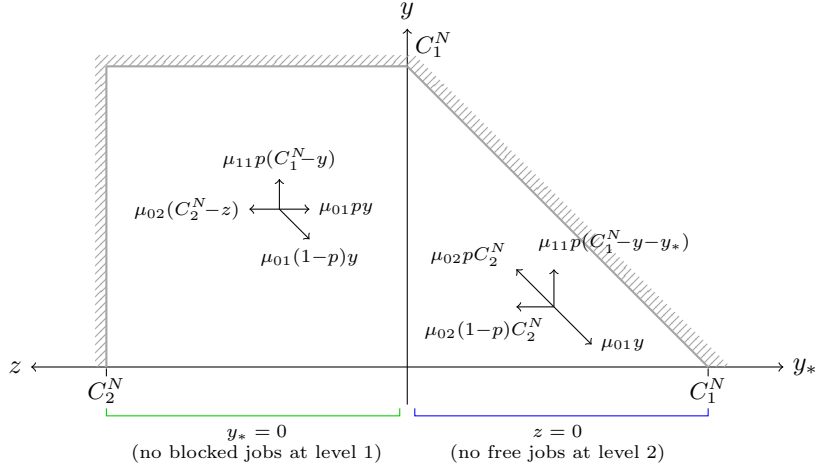
\begin{figure}
\begin{center}
\begin{tikzpicture}[scale=0.5]
    \node (p0) at (0,0) {};
    \draw[->] (p0.center) -- +(10,0) node[anchor=west] {$y_{*}$};
    \draw[->] (0,-1) -- +(0,10) node[anchor=south] {$y$};
    \draw[->] (p0.center) -- +(-10,0) node[anchor=east] {$z$};

    \node (p01) at (-1,0) {};
    \node (p10) at (1,0) {};
    \node (p08) at (-8,0) {};
    \node (p80) at (8,0) {};
    \node (p8) at (0,8) {};

    \draw (p08) -- +(0,-.04) node[anchor=north,font=\small] {$C_2^N$};
    \draw (p80) -- +(0,-.04) node[anchor=north,font=\small] {$C_1^N$};
    \draw (p8) -- +(-.04,0) node[anchor=south west,font=\small] {$C_1^N$};

    \draw[thick,gray!70!white] (p08.center) -- +(0,8);
    \draw[thick,gray!70!white] (p8.center) -- +(-8,0);
    \draw[thick,gray!70!white] (p80.center) -- (p8.center);
    \fill [pattern=north east lines,pattern color=gray!70!white] (-8.3,0) -- (p08.center) -- (-8,8) -- (0,8) -- (0,8.3) -- (-8.3,8.3) -- cycle;
    \fill [pattern=north east lines,pattern color=gray!70!white] (0,8) -- (0,8.3) -- (0.2,8.3) -- (8.6,0) -- (8,0) -- cycle;

    \draw[blue] (.2,-1.1) -- (.2,-1.3) -- (8,-1.3) -- (8,-1.1);
    \node at (-4,-1.3) [anchor=north,font=\scriptsize,align=center] {$y_*=0$\\(no blocked jobs at level~1)};
    \draw[green!80!black] (-.2,-1.1) -- (-.2,-1.3) -- (-8,-1.3) -- (-8,-1.1);
    \node at (4,-1.3) [anchor=north,font=\scriptsize,align=center] {$z=0$\\(no free jobs at level 2)};

    \node (d_or) at (-3.4,4.2) {};
    \draw[->] (d_or.center) -- +(-.8,0) node[anchor=east,font=\scriptsize] {$\mu_{02} (C_2^N\!{-}z)$};
    \draw[->] (d_or.center) -- +(0,.8) node[anchor=south,font=\scriptsize] {$\mu_{11} p (C_1^N\!\!{-}y)$};
    \draw[->] (d_or.center) -- +(+.8,0) node[anchor=west,font=\scriptsize] {$\mu_{01} p y$};
    \draw[->] (d_or.center) -- +(+.8,-.8) node[anchor=north,font=\scriptsize] {$\mu_{01} (1{-}p) y$};

    \node (g_or) at (3.9,1.6) {};
    \draw[->] (g_or.center) -- +(-1,1) node[anchor=east,font=\scriptsize,shift={(0,.2)}] {$\mu_{02} p C_2^N$};
    \draw[->] (g_or.center) -- +(-1,0) node[anchor=east,font=\scriptsize,shift={(0.5,-0.3)}] {$\mu_{02} (1{-}p) C_2^N$};
    \draw[->] (g_or.center) -- +(0,1) node[anchor=south,font=\scriptsize,shift={(.8,.1)}] {$\mu_{11} p (C_1^N\!\!{-}y{-}y_*)$}; %
    \draw[->] (g_or.center) -- +(1,-1) node[anchor=west,font=\scriptsize] {$\mu_{01} y$};
\end{tikzpicture}
\end{center}
\caption{A Representation of the Transitions Rates of $(X^N(t))$.
The three-dimensional  structure  is ``unfolded'' in two dimensions.} \label{fig:domain}
\end{figure}

\subsection*{Representation by Stochastic Differential Equations}
From the transition rates~\eqref{eq:Q_rates}, the process $(X^N(t))$ can also be seen as the unique solution of the following stochastic differential equations,
\begin{multline} \label{eq:integr-ystar-diff}
\diff Y_*^N(t) =  \sum_{i=1}^{+\infty} \ind{i\leq Y^N(t-),Z^N(t-)=0}\markedN{\mu_{01},i}(\diff t,\{0,1\})\\
- \ind{Y_*^N(t-)>0}\sum_{i=1}^{C_2^N}\markedN{\mu_{02},i}(\diff t,\{0,1\}),
\end{multline}
\vspace{-5mm}
\begin{multline}\label{eq:integr-y-diff}
  \diff Y^{N}(t) =-\sum_{i=1}^{+\infty}\ind{i\leq Y^N(t-)}\ind{Z^N(t-)=0}\markedN{\mu_{01},i}(\diff t,\{0,1\})
    \\ -\ind{Z^N(t-)>0}\markedN{\mu_{01},i}(\diff t,\{1\}) + \ind{Y_*^N(t-)>0}\sum_{i=1}^{C_2^N}\markedN{\mu_{02},i}(\diff t,\{0\})\\
  +\sum_{i=1}^{+\infty}\ind{i\leq N-Y_*^N(t-)-Y^{N}(t-)}\markedN{\mu_{11},i}(\diff t,\{0\}),
\end{multline}
\vspace{-5mm}
\begin{multline}\label{eq:integr-z-diff}
  \diff Z^{N}(t) =- \sum_{i=1}^{+\infty}\ind{i\leq Y^N(t-),Z^N(t-)>0} \markedN{\mu_{01},i}(\diff t,\{0,1\})\\
  +  \sum_{i=1}^{+\infty} \ind{i\leq C_2^N-Z^N(t-),Y_*^N(t-) = 0}\markedN{\mu_{02},i}(\diff t,\{0,1\}),
\end{multline}
starting from some fixed initial state. The notation $f(t{-})$ stands for the left-limit of $f$ at $t$.

\subsection*{Filtration}
The  $\sigma$-field ${\cal F}_t$ of the events up to time $t$ is classically  defined as the $\sigma$-field generated by the random variables
\[
\markedN{\xi,i}([0,s]{\times}u),\text{ where } \xi\in\{\mu_{01},\mu_{11},\mu_{02}\}, s\in[0,t], u\in\{\{0\},\{1\}\} \text{ and } i\in\N.
\]
With this definition the process $(Y^N_*(t),Y^N(t),Z^N(t))$ is clearly $({\cal F}_t)$-adapted. The martingale properties mentioned in the following are understood to be  with respect to this filtration.

\subsection*{Evolution equations}
The rescaled process is denoted by
\begin{equation}\label{eqX}
\left(\overline{X}^N(t)\right)\steq{def}\left(\overline{Y}_*^N(t),\overline{Y}^{N}(t), \overline{Z}^N(t)\right)\coloneq
\frac{1}{N}\left(Y_*^N(t),Y^{N}(t),Z^{N}(t)\right),
\end{equation}
the integration of the above SDEs and classical stochastic calculus give the relations
\begin{multline}
  \overline{Y}_*^N(t) =\overline{Y}_*^N(0) + \mu_{01}\int_0^t \overline{Y}^N(s)\ind{\overline{Z}^N(s)=0} \diff s
\\  -\mu_{02}\dfrac{C_2^N}{N}\int_0^t \ind{\overline{Y}_*^N(s)>0} \diff s + M_{Y_*}^{N}(t), \label{eq:integr-ystar}
\end{multline}
\vspace{-5mm}
\begin{multline}
    \overline{Y}^{N}(t) =\overline{Y}^{N}(0)-\mu_{01}\int_0^t \overline{Y}^N(s)\left(1{-}p\ind{\overline{Z}(s)>0}\right)\diff s
        \\ + p\mu_{02}\dfrac{C_2^N}{N}\int_0^t\ind{\overline{Y}_*^N(s)>0}\diff s
    +p\mu_{11}\int_0^t(1{-}\overline{Y}_*^N(s){-}\overline{Y}^{N}(s))\diff s + M_Y^N(t), \label{eq:integr-y}
\end{multline}
\vspace{-5mm}
\begin{multline}
  \overline{Z}^{N}(t) =\overline{Z}^{N}(0)
    -\mu_{01}\int_0^t \overline{Y}^N(s)\ind{\overline{Z}^N(s)>0}\diff s\\
    +\mu_{02}\int_0^t\left(\dfrac{C_2^N}{N}{-}\overline{Z}^N(s)\right)\ind{\overline{Y}_*^N(s)=0}\diff s + M^N_Z(t), \label{eq:integr-z}
\end{multline}
where, for $V{\in}\{Y_*,Y,Z\}$, $(M^N_V(t))$ is a martingale.
We complete this section with a tightness result.
\begin{proposition}\label{proptight}
  The sequence of processes $(\overline{X}^N(t))$ defined by Relation~\eqref{eqX} is tight and any of its limiting points is  a continuous process.
\end{proposition}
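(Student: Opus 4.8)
The plan is to read a semimartingale decomposition of $(\overline X^N(t))$ directly off the evolution equations~\eqref{eq:integr-ystar}, \eqref{eq:integr-y} and~\eqref{eq:integr-z}, in which each coordinate is written as its initial value plus an absolutely continuous drift term plus a martingale term, and then to show separately that the drift parts are tight with continuous limits and that the martingale parts vanish in the limit. The starting remark is that the rescaled state space $\tfrac1N{\cal S}_N$ is contained, for $N$ large, in the fixed compact set $[0,1]^2{\times}[0,r{+}1]$ by~\eqref{Scaling}; hence the coordinates $\overline Y_*^N,\overline Y^N,\overline Z^N$, and therefore all the integrands appearing in the drift terms of~\eqref{eq:integr-ystar}--\eqref{eq:integr-z} (each of them one of these coordinates multiplied by bounded constants and indicator functions), are bounded by a deterministic constant that does not depend on $N$.

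First I would treat the drift parts. For $N$ large, the drift term of each of the three equations is, pathwise, a $K$-Lipschitz function of $t$ on any interval $[0,T]$, with $K$ depending only on $p$, the $\mu_{ij}$ and $r$, since its time-derivative is a bounded combination of the bounded integrands above. Together with the convergence~\eqref{ScalingInit} of the initial states, this forces the ``initial value plus drift'' part of each coordinate to take its values, for $N$ large, in a fixed compact subset of ${\mathcal C}([0,T],\R)$ (functions with prescribed bounded initial value that are $K$-Lipschitz). By Arzel\`a--Ascoli this family is tight in ${\mathcal C}([0,T],\R)$, with all limiting points continuous.

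Next I would bound the martingale terms using the stochastic integral representations~\eqref{eq:integr-ystar-diff}--\eqref{eq:integr-z-diff}. Each of $M^N_{Y_*}$, $M^N_Y$, $M^N_Z$ is a purely discontinuous martingale whose jumps have size at most $2/N$ and whose total jump intensity is bounded by $CN$ (the Poisson rates in~\eqref{eq:Q_rates} are of order $N$ because of the factors $Y^N{\le}N$, $N{-}Y_*^N{-}Y^N{\le}N$ and $C_2^N{\sim}rN$); consequently its previsible quadratic variation satisfies $\croc{M^N_V}_t\leq C_V t/N$ for $V{\in}\{Y_*,Y,Z\}$. Doob's $L^2$ inequality then gives $\E\bigl[\sup_{s\leq T}|M^N_V(s)|^2\bigr]\leq 4\,\E\bigl[\croc{M^N_V}_T\bigr]\leq 4C_VT/N$, which tends to $0$ as $N\to+\infty$, so each martingale term converges to the zero process, uniformly on compact sets, in $L^2$ and hence in probability.

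To conclude, each coordinate of $(\overline X^N(t))$ is the sum of a $C$-tight sequence (the initial value plus drift part) and a sequence converging uniformly on compacts to $0$; such a sum is again $C$-tight, which yields the tightness of $(\overline X^N(t))$ and the continuity of all its limiting points. Equivalently, continuity of the limits can be read off from the bound $\sup_{t\leq T}|\overline X^N(t)-\overline X^N(t-)|\leq 2/N\to0$. The only step that is not entirely soft is the quadratic-variation estimate $\croc{M^N_V}_t\leq C_V t/N$, obtained by writing the angle bracket of each stochastic integral in~\eqref{eq:integr-ystar-diff}--\eqref{eq:integr-z-diff} as the integral of its squared (bounded) integrand against the deterministic intensities of~\eqref{eq:Q_rates} and using $C_2^N/N\to r$; I do not expect a genuine obstacle there, nor anywhere else in the argument. (An alternative route would be Aldous's tightness criterion, but the decomposition above is more direct once the relevant SDEs are in hand.)
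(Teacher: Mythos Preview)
Your proposal is correct and follows essentially the same approach as the paper's proof: both use the semimartingale decomposition from~\eqref{eq:integr-ystar}--\eqref{eq:integr-z}, bound the drift terms by the uniform boundedness of the rescaled state, show that the martingale terms vanish via the $O(1/N)$ bound on their previsible quadratic variation and Doob's inequality, and conclude tightness by a modulus-of-continuity argument. The only cosmetic difference is that you phrase the drift control via Arzel\`a--Ascoli whereas the paper invokes Billingsley's criterion directly.
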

\begin{proof}
Since, for $t{\geq}0$,  one has ${Y}_*^{N}\!(t){+}{Y}^{N}\!(t){\leq} N$ and $Z^N\!(t){\leq} C_2^N$, the variables $\overline{Y}_*^N\!(t)$, $\overline{Y}^{N}\!(t)$ and $\overline{Z}^{N}\!(t)$ are  thus uniformly bounded. By using a similar procedure as in the proof of Theorem~6.13 page~159 of Robert~\cite{Robert}, one can show that the  expected value of the previsible increasing process of  the martingales  $(M^N_V(t))$,  $V{\in}\{Y_*,Y,Z\}$,  is of the order of $1/N$ and thus converges to $0$.
By Doob's Inequality, one gets that for any $\eta{>}0$ and $T{>}0$, the relation
\begin{equation} \label{eq:martingale}
\lim_{N\to 0}\P\left(\sup_{0\leq t\leq T} \frac{|M_V^N(t)|}{N}\geq \eta \right)=0.
\end{equation}
holds.
Denote by $w_{f,T}$ the modulus of continuity of a function $(f(t))$ on $[0,T]$, i.e., for $\delta{>}0$
\[
w_{f,T}(\delta)=\sup\left(|f(t){-}f(s)|: 0{\leq}s{\leq}t{\leq}T,|t-s|{\leq}\delta\right).
\]
By using again that $(\overline{Y}_*^N(t))$, $(\overline{Y}^{N}(t))$ and  $(\overline{Z}^{N}(t))$ are bounded and by Relation~\eqref{eq:martingale}, Equations~\eqref{eq:integr-ystar},~\eqref{eq:integr-y} and~\eqref{eq:integr-z} show that for any $\eps{>}0$ and $\eta{>}0$, there exist $N_0{\geq} 1$ and $\delta_0{>}0$ such that if $N{\geq} N_0$ and $\delta{<}\delta_0$ then
\[
\P\left(w_{V,T}(\delta)\geq \eta\right)\leq \eps,\quad V{\in}\left\{\overline{Y}_*^{N}, \overline{Y}^{N}, \overline{Z}^{N}\right\}.
\]
One concludes with  Theorem~15.1 of Billingsley~\cite{Billingsley}.
\end{proof}

%!TEX root = ./main.tex
\section{Analysis of Auxiliary Processes} \label{sec:auxiliary-first}
To study the asymptotic evolution of blocked customers, it is convenient to introduce two  important stochastic processes. The first one  describes the behavior of the  system when the second level is permanently full, and the second one corresponds to the situation when there are no blocked class~$0$ customers at level $1$.

\subsection{A Process with Saturation of Level~2} \label{sec:auxiliary-saturation}
The corresponding process is denoted by $(Y_{a*}^{N}(t),Y_a^{N}(t))$, it describes a system when level $2$ is always saturated by class~$0$ jobs. The process $(Y_{a}^N(t))$ [resp.\ $(Y_{a*}^N(t))$] indicates the number of class~$0$ jobs [resp.\ blocked] at level~$1$. For this system blocked class $0$ jobs are served at rate $\mu_{02}C_2^N$, otherwise the statistical assumptions are the same as before.

This is a Markov process with transition rates defined by
\begin{equation}\label{eq:rates-ystar}
(y_*,y)\mapsto
    \begin{cases}
    (y_*{+}1,y{-}1) &\text{at rate } \mu_{01}y,\\
    (y_*{-}1,y) &\text{"\phantom{at rate}} \displaystyle (1-p)\mu_{02}C_2^N\ind{y_*>0},\\
    (y_*{-}1,y{+}1) &\text{"\phantom{at rate}} \displaystyle p \mu_{02}C_2^N\ind{y_*>0}, \\
    (y_*, y{+}1) &\text{"\phantom{at rate}} \displaystyle p \mu_{11}(N{-}y_*{-}y).
    \end{cases}
\end{equation}
The first transition is for a $0$ job being blocked after its service at level $1$. The second one corresponds to a $0$ job leaving level $2$ allowing a blocked $0$ job to go to level $2$ and a new $1$ job is added at level $1$. The third transition is similar except that a new $0$ job enters level $1$. The last transition corresponds to a $1$ job leaving level $1$ allowing a $0$ job to enter level $1$.

As long as $Y_*^N(t){>}0$, this Markov process has the same transition rates as the process $(Y_{*}^{N}(t),Y^{N}(t))$, see Relation~\eqref{eq:Q_rates}.
\begin{proposition}\label{thm:blocY}
If the initial condition of $(Y_{a*}^N(t),Y_{a}^{N}(t))$ is such that
\begin{equation}\label{initaux}
\lim_{N\to+\infty} \frac{1}{N} (Y_{a*}^N(0),Y_{a}^{N}(0))=(y_{a*}^0,y_a^0)\in[0,1]^2,
\end{equation}
with $0 \leq y_{a*}^0{+}y_a^0 \leq 1$
then, for the convergence in distribution, the relation
\[
\lim_{N\to+\infty} \frac{1}{N} (Y_{a*}^N(t),Y_{a}^{N}(t))=(y_{a*}(t),y_a(t))
\]
holds, where $(y_{a*}(t),y_a(t))$ is a couple of continuous functions such that
\begin{equation}\label{Skoh}
y_{a*}(t)+y_a(t)\geq \overline{h}(t) \coloneq (y_{a*}^0+y_a^0)e^{-p\mu_{11}t}+\left(1-\frac{(1-p)\mu_{02}r}{p\mu_{11}}\right)\left(1-e^{-p\mu_{11}t}\right)
\end{equation}
and $(y_{a*}(t),u(t))$ is the unique solution of the following Skorokhod problem
\begin{equation}\label{Sko}
y_{a*}(t)= y_{a*}^0+\mu_{01}\int_0^t y_a(s) \diff s-\mu_{02}r t+u(t)
\end{equation}
where $(u(t))$ is a non-decreasing continuous function such that $u(0)=0$ and
\[
\int_0^{+\infty}y_{a*}(s)\,\diff u(s)=0.
\]
\end{proposition}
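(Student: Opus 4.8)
The plan is the standard fluid-limit programme: establish tightness of the rescaled process, pass to the limit in the evolution equations along a subsequence, identify the limit through a one-dimensional Skorokhod (reflection) characterisation at the boundary $\{y_{a*}=0\}$, and conclude by uniqueness of the limiting system.

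First I would record, by the same stochastic calculus leading to~\eqref{eq:integr-ystar}--\eqref{eq:integr-y} but applied to the rates~\eqref{eq:rates-ystar}, the rescaled equations for $(\overline{Y}_{a*}^N,\overline{Y}_a^N)\coloneq(Y_{a*}^N,Y_a^N)/N$:
\[
\overline{Y}_{a*}^N(t)=\overline{Y}_{a*}^N(0)+\mu_{01}\int_0^t\overline{Y}_a^N(s)\diff s-\mu_{02}\frac{C_2^N}{N}\int_0^t\ind{\overline{Y}_{a*}^N(s)>0}\diff s+M^N_{*}(t),
\]
\begin{multline*}
\overline{Y}_a^N(t)=\overline{Y}_a^N(0)-\mu_{01}\int_0^t\overline{Y}_a^N(s)\diff s+p\mu_{02}\frac{C_2^N}{N}\int_0^t\ind{\overline{Y}_{a*}^N(s)>0}\diff s\\
+p\mu_{11}\int_0^t\bigl(1-\overline{Y}_{a*}^N(s)-\overline{Y}_a^N(s)\bigr)\diff s+M^N_{a}(t),
\end{multline*}
where $(M^N_{*}(t))$ and $(M^N_{a}(t))$ are martingales whose previsible brackets are of order $1/N$ on compact sets (jumps of size $1/N$ at total rate $O(N)$). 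Exactly as for Proposition~\ref{proptight}, the coordinates lie in $[0,1]$, the martingale terms vanish uniformly on compact sets by Doob's inequality, and $(\overline{Y}_{a*}^N,\overline{Y}_a^N)$ is tight with continuous limit points. The key auxiliary object is the non-decreasing, uniformly Lipschitz process $U^N(t)\coloneq\mu_{02}\frac{C_2^N}{N}\int_0^t\ind{\overline{Y}_{a*}^N(s)=0}\diff s$, which is tight in $C(\R_+)$ and rewrites the first equation as $\overline{Y}_{a*}^N(t)=\overline{Y}_{a*}^N(0)+\mu_{01}\int_0^t\overline{Y}_a^N(s)\diff s-\mu_{02}\frac{C_2^N}{N}t+U^N(t)+M^N_{*}(t)$, isolating the reflection term.

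Working along a subsequence on which $(\overline{Y}_{a*}^N,\overline{Y}_a^N,U^N)$ converges --- by Skorokhod's representation one may take the convergence almost sure and uniform on compact sets --- one passes to the limit using $C_2^N/N\to r$ and~\eqref{initaux}: the limit $(y_{a*},y_a,u)$ has $u$ non-decreasing continuous with $u(0)=0$, satisfies Relation~\eqref{Sko}, and, summing the two limiting equations, $h\coloneq y_{a*}+y_a$ solves $h(t)=h(0)+p\mu_{11}\int_0^t(1-h(s))\diff s-(1-p)(\mu_{02}rt-u(t))$ with $h(0)=y_{a*}^0+y_a^0$. The delicate step is the complementarity $\int_0^{+\infty}y_{a*}(s)\diff u(s)=0$: if $y_{a*}(t_0)>0$, then by continuity $y_{a*}>\delta>0$ on a neighbourhood $I$ of $t_0$, hence $\overline{Y}_{a*}^N>0$ on $I$ for $N$ large, so $\ind{\overline{Y}_{a*}^N(s)=0}\equiv0$ there and $U^N$, thus $u$, is constant on $I$; as $\{y_{a*}>0\}$ is open, $\diff u$ gives it no mass. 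So $(y_{a*},u)$ solves the Skorokhod problem~\eqref{Sko}. For the lower bound~\eqref{Skoh}, since $u(t)\ge u(0)=0$ the $h$-equation gives $h(t)\ge h(0)+p\mu_{11}\int_0^t(1-h(s))\diff s-(1-p)\mu_{02}rt$; the function $\overline{h}$ of~\eqref{Skoh} solves the corresponding linear ODE with equality and $\overline{h}(0)=h(0)$, so $g\coloneq h-\overline{h}$ satisfies $g(0)=0$ and $g'\ge-p\mu_{11}g$, whence $g\ge0$, i.e.\ $h(t)\ge\overline{h}(t)$.

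Finally, to upgrade tightness to convergence in distribution I would show the limiting system has a unique solution. It is described by the two relations above in the triple $(y_{a*},h,u)$: for fixed $u$ the linear integral equation for $h$ has a unique solution, Lipschitz in $u$ for the sup norm on $[0,T]$; for fixed $h$, Relation~\eqref{Sko} with $y_a=h-y_{a*}$ is a one-dimensional Skorokhod problem at $0$ with Lipschitz drift $-\mu_{01}\int_0^t y_{a*}(s)\diff s$, whose solution $(y_{a*},u)$ is again Lipschitz in $h$. As the one-dimensional Skorokhod map is Lipschitz for the sup norm, the composition is a contraction on $C([0,T])$ for $T$ small, and iterating over consecutive intervals yields a unique global $(y_{a*},y_a)$; every subsequential limit equals it, so the full sequence converges in distribution. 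I expect the boundary complementarity argument, together with this fixed-point/Lipschitz-Skorokhod well-posedness, to be the substantive part of the proof, the rest being routine fluid-limit bookkeeping.
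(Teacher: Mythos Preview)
Your proposal is correct and follows the same overall architecture as the paper (tightness, Skorokhod identification at the boundary $\{y_{a*}=0\}$, uniqueness of the limiting system, lower bound on $y_{a*}+y_a$), but two technical steps are carried out differently.

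For the complementarity condition $\int y_{a*}\,\diff u=0$, you argue directly from uniform convergence on the event $\{y_{a*}>\delta\}$; the paper instead invokes the continuity of the one-dimensional Skorokhod map (Proposition~D.4 of Robert~\cite{Robert}) to transport the reflection structure from the prelimit $(\overline{Y}_{a*}^N,U^N)$ to the limit in one stroke. Your argument is slightly more elementary; the paper's is shorter once the cited result is available. For uniqueness, the paper proceeds by an explicit elimination: it combines the two evolution equations to solve the linear ODE for $\int_0^t\overline{Y}_a^N(s)\,\diff s$ in closed form as a functional of $\overline{Y}_{a*}^N$ alone, thereby rewriting the free process $f_a$ as $\overline{G}(y_{a*})$ for an explicit integral operator $\overline{G}$ satisfying $\|\overline{G}(a)-\overline{G}(b)\|_{\infty,t}\le C_t\int_0^t\|a-b\|_{\infty,s}\,\diff s$, and then appeals to the generalized Skorokhod uniqueness result of Anderson and Orey~\cite{Anderson}. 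Your route---treating the pair $(h,y_{a*})$ and closing a contraction loop via the Lipschitz property of the one-dimensional Skorokhod map on short intervals---reaches the same conclusion without the explicit algebra, at the cost of a slightly more abstract fixed-point argument whose Lipschitz estimates you should spell out (the key point being that the map $h\mapsto u$ through the reflected ODE has Lipschitz constant $O(T)$ on $[0,T]$). The lower-bound arguments are essentially the same: the paper writes the difference $x=h-\overline{h}$ as the solution of $x(t)+p\mu_{11}\int_0^t x(s)\,\diff s=(1-p)u(t)$ and argues non-negativity by contradiction, which is your Gr\"onwall inequality $g'\ge -p\mu_{11}g$ in integrated form.
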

Concerning the Skorokhod problem in dimension~1, see Skorokhod~\cite{Skorokhod}, Chaleyat-Maurel and El~Karoui~\cite{Elkaroui}. The main trick is to express the couple $(y_{a*}(t),u(t))$ of Equation~\eqref{Sko} as a regular functional of the free process% of Equation~\eqref{Sko}
\[
\left(y_{a*}^0{+}\mu_{01}\int_0^t y_a(s) \diff s{-}\mu_{02}r t\right).
\]
Note that, in our case, this free process  depends  on $(y_{a*}(t),y_a(t))$. % as it will be seen it does not change  the results of uniqueness and regularity. See Anderson and Orey~\cite{Anderson}.

\begin{proof}
We will proceed as follows, first show that any limiting point $(y_{a*}(t),y_a(t))$ of $(Y_{a*}^N(t),Y_{a}^{N}(t))$ is such that $(y_{a*}(t))$ can be seen as the first coordinate of the solution of a Skorokhod problem associated with a free process. In a  second step, we will show that the later process   can be expressed as a regular functional of $(y_{a*}(t))$.  One has then to use uniqueness results of Anderson and Orey~\cite{Anderson} to conclude the proof.

From the transition rates~\eqref{eq:rates-ystar}, the process $(Y_{a*}^N(t),Y_{a}^{N}(t))$ can be seen as the solution of the stochastic differential equations (SDE)
\begin{align*}
\diff Y_{a*}^N(t) &=\sum_{i=1}^{+\infty}  \ind{i\leq Y_{a}^N(t-)}\markedN{\mu_{01},i}(\diff t{,}\{0,1\})
- \ind{Y_{a*}^N(t-)>0,i\leq C_2^N}\markedN{\mu_{02},i}(\diff t{,}\{0,1\}),\\
\diff Y_{a}^{N}(t) &=-\sum_{i=1}^{+\infty}  \ind{i\leq Y_{a}^N(t-)} \markedN{\mu_{01},i}(\diff t{,}\{0,1\})
+ \ind{Y_{a*}^N(t-)>0,i\leq C_2^N}\markedN{\mu_{02},i}(\diff t{,}\{0\})\\ &\qquad +\sum_{i=1}^{+\infty}  \ind{i\leq N-Y_{a*}^N(t-)-Y_{a*}^N(t-)}\markedN{\mu_{11},i}(\diff t{,}\{0\}).
\end{align*}
With the notation
\[
\left(\overline{Y}_{a*}^N(t),\overline{Y}_a^{N}(t)\right)=
\frac{1}{N}\left(Y_{a*}^N(t),Y_{a}^{N}(t)\right),
\]
by integrating the above SDE, one gets the relations
\begin{align}
\overline{Y}_{a*}^N(t) &=\overline{Y}_{a*}^N(0) + \mu_{01}  \int_0^t\overline{Y}_a^{N}(s) \diff s
-\mu_{02} \frac{C_2^N}{N}\int_0^t \ind{\overline{Y}_{a*}^N(s)>0}\diff s+M_*^N(t)\label{Meq1}\\
\overline{Y}_a^{N}(t) &= \overline{Y}_a^{N}(0)-\mu_{01}\int_0^t \overline{Y}_a^{N}(s)\diff s
+p\mu_{11}\int_0^t \left(1-\overline{Y}_{a*}^N(s)-\overline{Y}_a^{N}(s)\right)\diff s\label{Meq2}\\
&+p\mu_{02}\frac{C_2^N}{N}\int_0^t\ind{\overline{Y}_{a*}^N(s)>0}\diff s+M^N(t),\notag
\end{align}
where $(M_*^N(t))$ and $(M^N(t))$ are local martingales.
In the same way as in the proof of Proposition~\ref{proptight} of Section~\ref{sec:init},
one can prove that the sequence of processes $(\overline{Y}_{a*}^N(t), \overline{Y}_a^{N}(t))$ is tight and that any of its limiting points is a continuous process.

Let $(y_{a*}(t),y_a(t))$ be a limiting point, i.e., for some subsequence $(N_k)$ the relation
\[
\lim_{k\to+\infty} \left(\overline{Y}_{a*}^{N_k}(t), \overline{Y}_a^{N_k}(t)\right)=(y_{a*}(t),y_a(t))
\]
holds for the convergence in distribution of processes.
Denote
\begin{equation}\label{eqX1}
  F_a^N(t)=\overline{Y}_{a*}^N(0) + \mu_{01}  \int_0^t\overline{Y}_a^{N}(s) \diff s
  -\mu_{02}\frac{C_2^N}{N}t +M_*^N(t).
\end{equation}
Equation~\eqref{Meq1} can be written as
\[
\overline{Y}_{a*}^N(t) =F_a^N(t)+\mu_{02}\frac{C_2^N}{N} \int_0^t \ind{\overline{Y}_{a*}^N(s)=0}\diff s,
\]
so that the couple
\[
\left(\overline{Y}_{a*}^N(t), \mu_{02}\frac{C_2^N}{N} \int_0^t \ind{\overline{Y}_{a*}^N(s)=0}\diff s\right)
\]
is the solution of the Skorokhod problem associated with the free process $(F_a^N(t))$. See Skorokhod~\cite{Skorokhod} and Appendix~D of Robert~\cite{Robert} for a brief account.

For the convergence in distribution of processes, one has
\begin{equation}\label{s1}
\lim_{k\to+\infty} (F_a^{N_k}(t))=(f_a(t)) \coloneq \left(y_{a*}^0+\mu_{01}\int_0^t y_a(s) \diff s
-\mu_{02}r t\right),
\end{equation}
since, as before, the martingales are vanishing as $N$ gets large. 
From Proposition~D.4 of the appendix of Robert~\cite{Robert}, one gets that $(y_{a*}(t))$ is the first coordinate of the solution of the Skorokhod problem associated with $(f_a(t))$ and $(y_{a*}(t))$ is differentiable almost everywhere for the Lebesgue measure on $\R_+$. In particular Relation~\eqref{Sko} holds. 

Since the free process $(f_a(t))$ depends on $(y_{a*}(t),y_a(t))$, there is no guarantee of the uniqueness of such a limit point $(y_{a*}(t),y_a(t))$. We now give a representation of $(f_a(t))$ in terms of $(y_{a*}(t))$. 
We proceed by getting rid of the process $(\overline{Y}_a^N(t))$ in the expression~\eqref{eqX1} of $(F_a^N(t))$.
From Equations~\eqref{Meq1} and~\eqref{Meq2}, we get  the relation
\[
p \overline{Y}_{a*}^N(t) + \overline{Y}_a^N(t) =
\begin{multlined}[t]
p \overline{Y}_{a*}^N(0) + \overline{Y}_a^N(0) + p \mu_{11} t - p \mu_{11} \int_0^t \overline{Y}_{a*}^N(s) \diff s\\
- ((1-p) \mu_{01} + p \mu_{11}) \int_0^t \overline{Y}_a^N(s) \diff s + p M_*^N(t) + M^N(t).
\end{multlined}
\] % checked.
 By reordering the terms, one gets the relation
\begin{equation}\label{eqax1}
    \overline{Y}_a^N(t) + \overline{\mu} \int_0^t \overline{Y}_a^N(s) \diff s =
    \begin{multlined}[t]
    \left(p \overline{Y}_{a*}^N(0) + \overline{Y}_a^N(0)\right) + p \mu_{11} t - p \overline{Y}_{a*}^N(t)\\
    - p \mu_{11} \int_0^t \overline{Y}_{a*}^N(s) \diff s + p M_*^N(t) + M^N(t),
    \end{multlined}
\end{equation}
with $\overline{\mu} \steq{def} (1{-}p) \mu_{01} {+} p \mu_{11}$.
Hence, by denoting
\[
  K^N(t)\steq{def} \frac{1}{\overline{\mu}}\left(p \overline{Y}_{a*}^N(0) + \overline{Y}_a^N(0)\right) \left(e^{\overline{\mu} t} - 1\right) + \dfrac{p \mu_{11}}{\overline{\mu}^2} \left(1+(\overline{\mu}t-1)e^{\overline{\mu} t} \right),
  \]
  and $(k(t))$  its limit,
  \[
  k(t)\steq{def} \frac{1}{\overline{\mu}}\left(p y_{a*}^0 +y_a^0\right) \left(e^{\overline{\mu} t} - 1\right) + \dfrac{p \mu_{11}}{\overline{\mu}^2} \left(1+(\overline{\mu}t-1)e^{\overline{\mu} t} \right) ,
  \]
from Equation~\eqref{eqax1}, trite calculations give the representation
\begin{multline*}
\int_0^t \overline{Y}_a^N(s) \diff s = K^N(t) e^{-\overline{\mu} t} - p \int_0^t \left(\overline{Y}_{a*}^N(s) + \mu_{11} \int_0^s \overline{Y}_{a*}^N(u) \diff u \right)e^{-\overline{\mu} (t-s)} \diff s\\
+ \int_0^t \left(p M_*^N(s) + M^N(s)\right)e^{-\overline{\mu}(t- s)} \diff s.
\end{multline*}
Therefore, we can write the free process $(F_a^N(t))$ as
\begin{multline}\label{free1}
F_a^N(t)=G\left(\overline{Y}_{a*}^N\right)(t)+ \overline{Y}_{a*}^N(0)+\mu_{01} K^N(t) e^{-\overline{\mu} t}-\mu_{02}\dfrac{C_2^N}{N} t\\ +\mu_{01}\int_0^t \left(p M_*^N(s) + M^N(s)\right)e^{-\overline{\mu}(t-s)} \diff s + M_*^N(t),
\end{multline}
where $G(\cdot)$ is a functional on Borelian functions $(x(t))$ defined by
\[
G(x)(t) = - p \mu_{01}   \int_0^t \left(x(s) + \mu_{11} \int_0^s x(u) \diff u \right)e^{-\overline{\mu}(t-s)} \diff s.
\]
This gives us an alternative representation of $(f_a(t))$ as
\begin{equation}\label{s2}
f_a(t)=\overline{G}(y_{a*})(t)\steq{def}G(y_{a*})(t)+y_{a*}^0+\mu_{01} k(t) e^{-\overline{\mu} t}-\mu_{02}r t.
\end{equation}
We have shown that $(y_{a*}(t))$ is the first coordinate of $(y_{a*}(t),u(t))$, the solution of a generalized Skorokhod problem associated to the functional $\overline{G}$,
\[
y_{a*}(t)=\overline{G}(y_{a*})(t)+u(t) \text{ and } \int_0^{+\infty} y_{a*}(s)\diff u(s)=0,
\]
with the usual assumptions on $(y_{a*}(t))$ and $(u(t))$.  See  Anderson and Orey~\cite{Anderson}. 
For any Borelian functions $(a(t))$ and $(b(t))$ on $\R_+$, it is not difficult to check that
\[
\| \overline{G}(a) - \overline{G}(b) \|_{\infty,t}\steq{def}   \sup_{0\leq s\leq t} \| \overline{G}(a)(s) - \overline{G}(b)(s) \| \leq C_t \int_0^t \| a - b \|_{\infty,s} \diff s,
\]
with $C_t = p \mu_{01} (1 + \mu_{11}t)$.
Anderson and Orey~\cite{Anderson} show that such $(y_{a*}(t))$ is unique.
The convergence in distribution follows:
\[
\lim_{N\to+\infty} \left(\overline{Y}_{a*}^N(t), \mu_{02}\frac{C_2^N}{N} \int_0^t \ind{\overline{Y}_{a*}^N(s)=0}\diff s\right)=(y_{a*}(t),u(t)). 
\]
Consequently, Relations~\eqref{Meq1} and~\eqref{Meq2}  give the relations
\begin{equation}\label{Sk01}
\begin{cases}
\displaystyle y_{a*}(t) =y_{a*}^0+\mu_{01}\int_0^t y_a(s) \diff s-\mu_{02}r t+u(t),\\
\displaystyle y_a(t)= y_a^0-\mu_{01}\int_0^t y_a(s)\diff s+ p\mu_{11}\int_0^t \left(1-y_{a*}(s)-y_a(s)\right)\diff s\\
\ \hfill + p\mu_{02}r t-p u(t). 
\end{cases}
\end{equation}
By using Relations~\eqref{s1} and~\eqref{s2}, one deduce the uniqueness of $(y_a(t))$and, therefore, the convergence in distribution of the sequence of processes  $(Y_{a*}^N(t),Y_{a}^{N}(t))$. 

We now prove that the limit $(y_{a*}(t),y_a(t))$  satisfies necessarily $y_{a*}(t){+} y_a(t){\geq} \overline{h}(t)$ for all $t$, where $\overline{h}$ is the solution of
\[\overline{h}(t) = (y_{a*}^0+y_a^0)-(1{-}p)\mu_{02}r t+p\mu_{11}\int_0^t \left(1-\overline{h}(s)\right)\diff s,
\]
that is,
\[
  \overline{h}(t) = (y_{a*}^0+y_a^0) e^{-p \mu_{11} t} + \left(1 - \dfrac{(1-p) \mu_{02} r}{p \mu_{11}}\right) (1 - e^{-p \mu_{11} t}).
\]
First note that, for any $N$, the process $\overline{Y}_a^N$ is bounded above by $1$, so that $F_a^N$ is Lipschitz. Hence, again by Proposition~D.4 of the appendix of Robert~\cite{Robert}, $u$ is  also Lipschitz, and thus continuous.

From Relations~\eqref{Sk01}, one gets that the identity
\begin{multline*}
(y_{a*}(t)+y_a(t)) =(y_{a*}^0+y_a^0)-(1{-}p)\mu_{02}r t\\ + p\mu_{11} \int_0^t \left(1-(y_{a*}(s) + y_a(s))\right)\diff s + (1{-}p) u(t)
\end{multline*}
holds, so that the difference $y_{a*}{+}y_a{-} \overline{h}$ satisfies the system
\begin{equation*}
    x(t) + p \mu_{11} \int_0^t x(s) \diff s = (1{-}p)u(t), \text{ with } x(0)=0
\end{equation*}
Any continuous solution $(x(t))$ of this system is non-negative.  Suppose that there exists $t_1{>} 0$ such that $x(t_1) {<} 0$.
Then, by continuity of $(x(t))$, there exists $t_0 {<} t_1$ such that $x(t_0){=} 0$ and $x(t) {<} 0$ for $t_0{<} t{<}t_1$.
But
\[
x(t_1) - x(t_0) = x(t_1) = (1{-}p)(u(t_1) - u(t_0)) - p \mu_{11} \int_{t_0}^{t_1} x(s) \diff s
\]
and the right-hand-side of the equality is positive because $(u(t))$ is non-decreasing and $(x(t))$ is negative on this interval, which is a contradiction.
Relation~\eqref{Skoh} is established. The proposition is proved.
\end{proof}
\begin{proposition}  \label{thm:diff_first_regime}
Under Condition~\eqref{eq:cond1},
there exists  $t_0{\geq} 0$, independent of the initial state~\eqref{initaux}, such that for $t{\geq} t_0$, the functions $(y_{a*}(t))$ and $(y_a(t))$ introduced in Proposition~\ref{thm:blocY} are differentiable at $t$ and
\begin{equation}\label{eq:diff_first_regime}
\begin{cases}
 \dfrac{\diff}{\diff t}{y}_{a*}(t)=\mu_{01}y_a(t) - \mu_{02}r,\\ \\
 \dfrac{\diff}{\diff t}{y}_a(t)=-(\mu_{01}{+}p\mu_{11})y_a(t)-p\mu_{11}y_{a*}(t)+p(\mu_{02}r{+}\mu_{11})
\end{cases}
\end{equation}
 \end{proposition}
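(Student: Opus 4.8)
The limit $(y_{a*}(t),y_a(t),u(t))$ of Proposition~\ref{thm:blocY} being unique, the plan is to produce a deterministic time $t_0$ --- depending only on the parameters and not on the initial fluid state $(y_{a*}^0,y_a^0)$ --- beyond which the reflection term $u$ is constant; once this is done, subtracting the identities~\eqref{Sk01} at $t$ and at $t_0$ and differentiating gives~\eqref{eq:diff_first_regime} at once. So the whole point is to push $u$ off after a uniform time.

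First I would put the hypothesis in a usable form. Clearing denominators shows that Condition~\eqref{eq:cond2} is equivalent to $\overline{h}_\infty>\mu_{02}r/\mu_{01}$, where $\overline{h}_\infty\coloneq 1-(1-p)\mu_{02}r/(p\mu_{11})$ is the limit of the lower bound $\overline{h}(t)$ of~\eqref{Skoh}. Since $\overline{h}(t)-\overline{h}_\infty=(y_{a*}^0+y_a^0-\overline{h}_\infty)e^{-p\mu_{11}t}$ and the bracket is bounded by $1+|\overline{h}_\infty|$ over the admissible initial states, one may fix $\eta\in(0,\overline{h}_\infty-\mu_{02}r/\mu_{01})$ and then a time $t_0$, depending only on the parameters, so that $\overline{h}(t)\ge\overline{h}_\infty-\eta>\mu_{02}r/\mu_{01}$ for all $t\ge t_0$; this is where the uniformity in the initial state comes from.

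The heart of the argument is to show that $\{t\ge t_0:\ y_{a*}(t)=0\}$ is Lebesgue-negligible. By~\eqref{Sko} one has $y_{a*}=f_a+u$ with $f_a(t)=y_{a*}^0+\mu_{01}\int_0^t y_a(s)\,\diff s-\mu_{02}r\,t$ of class $C^1$ (as $y_a$ is continuous) and $u$ Lipschitz, as shown in the proof of Proposition~\ref{thm:blocY}; hence $y_{a*}$ is Lipschitz and its derivative vanishes a.e.\ on each of its level sets, in particular a.e.\ on $\{y_{a*}=0\}$. At such a point $t$, differentiating the first line of~\eqref{Sk01} and using $\dot u\ge 0$ gives $0=\mu_{01}y_a(t)-\mu_{02}r+\dot u(t)\ge\mu_{01}y_a(t)-\mu_{02}r$, so $y_a(t)\le\mu_{02}r/\mu_{01}$; but $y_{a*}(t)=0$ together with~\eqref{Skoh} forces $y_a(t)\ge\overline{h}(t)>\mu_{02}r/\mu_{01}$ for $t\ge t_0$, a contradiction. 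Consequently the measure $\diff u$, which by $\int_0^{+\infty}y_{a*}\,\diff u=0$ is carried by $\{y_{a*}=0\}$, puts no mass on $[t_0,+\infty)$; being Lipschitz, $u$ is then constant there.

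To finish, subtract~\eqref{Sk01} at $t$ and at $t_0$: since $u$ is constant on $[t_0,+\infty)$, one gets for $t\ge t_0$
\[
y_{a*}(t)=y_{a*}(t_0)+\mu_{01}\int_{t_0}^t y_a(s)\,\diff s-\mu_{02}r\,(t-t_0),
\]
together with the analogous identity for $y_a$, the $-pu$ term having disappeared; as $y_a$, hence every integrand, is continuous, both $y_{a*}$ and $y_a$ are of class $C^1$ on $[t_0,+\infty)$ and differentiation yields exactly~\eqref{eq:diff_first_regime}. Since $t_0$ does not depend on $(y_{a*}^0,y_a^0)$, the proposition follows. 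I expect the negligibility step to be the main obstacle: the only a priori information available is the comparison $y_{a*}+y_a\ge\overline{h}$ of Proposition~\ref{thm:blocY} and the monotonicity of $u$, and the trick is to observe that on $\{y_{a*}=0\}$ the relation $\dot y_{a*}=0$ forces $\mu_{01}y_a\le\mu_{02}r$, which the hypothesis makes incompatible with $y_{a*}+y_a\ge\overline{h}$ for large times.
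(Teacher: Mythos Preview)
Your proof is correct and follows the same overall strategy as the paper: both use the lower bound~\eqref{Skoh} to force $y_a(t)>\mu_{02}r/\mu_{01}$ after some uniform time, and both derive the complementary inequality $y_a(t)\le\mu_{02}r/\mu_{01}$ at zeros of $y_{a*}$ to rule them out.

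The one technical difference worth noting is how that complementary inequality is obtained. The paper uses the explicit one-dimensional Skorokhod representation
\[
y_{a*}(t)=f_a(t)\vee\sup_{0\le s\le t}\bigl(f_a(t)-f_a(s)\bigr),
\]
so $y_{a*}(t_1)=0$ forces $\mu_{01}\int_s^{t_1}y_a\le\mu_{02}r(t_1-s)$ for all $s\le t_1$, and letting $s\uparrow t_1$ gives $\mu_{01}y_a(t_1)\le\mu_{02}r$ at \emph{every} zero $t_1>t_0$. This yields directly $y_{a*}(t)>0$ on $(t_0,\infty)$ and hence $\diff u\equiv 0$ there. Your route via a.e.\ vanishing of $\dot y_{a*}$ on level sets only gives the inequality at a.e.\ zero, so you must add the extra observation that $\diff u$ is absolutely continuous (since $u$ is Lipschitz) to conclude it charges no Lebesgue-null set. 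Both arguments are sound; the paper's is one step shorter and delivers the slightly stronger pointwise statement $y_{a*}>0$ on $(t_0,\infty)$, whereas yours avoids invoking the Skorokhod formula and works purely from the integral identities~\eqref{Sk01}.
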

Any solution $(y_{a*}(t),y_{a}(t))$ of the differential system~\eqref{eq:diff_first_regime} converges to 
\begin{equation} \label{eq:limit_first_regime}
\left(1 {-} \frac{\mu_{02} r}{\mu_{01}}\left( \frac{(1{-}p)\mu_{01}}{p \mu_{11}} {+}1\right),\frac{\mu_{02} r}{\mu_{01}}\right).
\end{equation}
\begin{proof}
The above proposition shows that
\[
\liminf_{t\to+\infty} y_{a*}(t)+y_a(t) \geq 1-\frac{(1-p)\mu_{02}r}{p\mu_{11}}.
\]

Let
\[
\eps_0= \left(1-\frac{(1-p)\mu_{02}r}{p\mu_{11}}\right)-\frac{\mu_{02}r}{\mu_{01}},
\]
then $\eps_0{>}0$ by Condition~\eqref{eq:cond1}. Let $t_0$ be such that if $t{\geq} t_0$ then
\begin{equation}\label{abc}
y_{a*}(t)+y_a(t) \geq 1-\frac{(1-p)\mu_{02}r}{p\mu_{11}}- \frac{\eps_0}{2}.
\end{equation}
The classical representation of the solution of one-dimensional Skorokhod problem, see Relation~(D.1) p.376 of Robert~\cite{Robert}, gives the identity
\[
y_{a*}(t)= \left(y_{a*}^0+\mu_{01}\int_0^t y_a(u) \diff u-\mu_{02}r t\right) \vee
\sup_{0\leq s\leq t}\left(\mu_{01}\int_s^t y_a(u) \diff u-\mu_{02}r (t-s)\right).
\]
If $t_1{>}t_0$ is such that $y_{a*}(t_1){=}0$, then by continuity of $(y_a(t))$, one gets the relation
\[
\mu_{01}y_a(t_1)\leq \mu_{02}r,
\]
and, by using Relation~\eqref{abc},
\[
1-\frac{(1-p)\mu_{02}r}{p\mu_{11}}-\frac{\eps_0}{2}\leq y_{a*}(t_1)+y_a(t_1)=y_a(t_1)\leq \frac{\mu_{02}r}{\mu_{01}}
\]
which leads to a contradiction. One concludes that $t\mapsto y_{a*}(t)$ is positive for $t{>}t_0$ and consequently that the measure $\diff u(t)$ vanishes on the interval $(t_0,+\infty)$. The proposition is proved.
\end{proof}

\subsection{A System without blocked jobs}\label{sec:auxiliary-free}
A second  auxiliary process is introduced, it is denoted by $(Y_b^N(t),Z_b^N(t))$.  It describes the situation when there are no blocked jobs at level one: if a class~0 job finishes while level two is saturated, i.e., $Z_b^N(t){=}0$, then it leaves the system, instead of being blocked.  If there are free servers at level two, the process behaves in the same way as the main process under study.  The process $(Y_b^N(t), Z_b^N(t))$ is a Markov process, with the following transition rates:
\begin{equation}\label{eq:rates-z}
(y,z)\mapsto
    \begin{cases}
    (y{-}1,0) &\text{at rate \phantom{"}} (1{-}p) \mu_{01} y \ind{z = 0},\\
    (y{-}1,z{-}1) &\text{"\phantom{at rate} } \displaystyle (1{-}p)\mu_{01} y \ind{z >0},\\
    (y,z{-}1) &\text{"\phantom{at rate} } \displaystyle p \mu_{01} y \ind{z >0},\\
    (y{+}1, z) &\text{"\phantom{at rate} } \displaystyle p \mu_{11}(N{-}y),\\
    (y, z{+}1) &\text{"\phantom{at rate} } \displaystyle \mu_{02} (C_2^N {-}z).
    \end{cases}
\end{equation}
Note that, when $z{>}0$, this Markov process has the same transition rates as the process $(X^N(t))$ (see \eqref{eq:Q_rates}).

\begin{proposition} \label{thm:blocZ}
  If the initial condition of $(Y_b^{N}(t), Z_b^N(t))$ is such that
\begin{equation}\label{initaux2}
\lim_{N\to+\infty} \frac{1}{N} (Y_b^{N}(0),Z_b^{N}(0))=(y_b^0,z_b^0)\in[0,1]\times[0,r],
\end{equation}
then, for the convergence in distribution, the relation
\[
\lim_{N\to+\infty} \frac{1}{N} (Y_b^{N}(t),Z_b^{N}(t))=(y_b(t),z_b(t))
\]
holds, where $(y_b(t))$ is given by 
\begin{equation} \label{eq:y-second-regime}
    y_b(t) = y_b^0 e^{-(p \mu_{11} {+} (1{-}p) \mu_{01})t} + \dfrac{p \mu_{11}}{p \mu_{11} {+} (1{-}p) \mu_{01}} \left(1{-} e^{-(p \mu_{11} {+} (1{-}p) \mu_{01})t} \right)
  %\dfrac{\diff}{\diff t}y_b(t) = - (p \mu_{11} + (1-p) \mu_{01} ) y_b(t) + p \mu_{11}
\end{equation}
and $(z_b(t))$ is the unique solution of the Skorokhod problem
\begin{equation}\label{eq:Sko2}
z_b(t) = z_b^0 + \mu_{02} r t - \mu_{02} \int_0^t z_b(s) \diff s - \mu_{01} \int_0^t y_b(s) \diff s + u(t)
\end{equation}
where $(u(t))$ is a non-decreasing continuous function such that $u(0){=}0$ and
\[
\int_0^{+\infty}z_b(s)\diff u(s)=0\,.
\]
\end{proposition}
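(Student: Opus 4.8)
The plan is to follow the scheme of the proof of Proposition~\ref{thm:blocY}, taking advantage of the fact that here the coordinate $(Y_b^N(t))$ evolves autonomously, which makes the argument somewhat lighter.

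First I would turn the transition rates~\eqref{eq:rates-z} into a system of stochastic differential equations driven by the marked Poisson processes $\markedN{\mu_{01},i}$, $\markedN{\mu_{11},i}$, $\markedN{\mu_{02},i}$, and integrate them to obtain, with $(\overline{Y}_b^N(t),\overline{Z}_b^N(t)) \coloneq (Y_b^N(t),Z_b^N(t))/N$, the evolution equations
\begin{align*}
\overline{Y}_b^N(t) &= \overline{Y}_b^N(0) - (1{-}p)\mu_{01}\int_0^t \overline{Y}_b^N(s)\diff s + p\mu_{11}\int_0^t\bigl(1{-}\overline{Y}_b^N(s)\bigr)\diff s + M_Y^N(t),\\
\overline{Z}_b^N(t) &= \overline{Z}_b^N(0) + \mu_{02}\frac{C_2^N}{N}\,t - \mu_{02}\int_0^t \overline{Z}_b^N(s)\diff s - \mu_{01}\int_0^t \overline{Y}_b^N(s)\ind{\overline{Z}_b^N(s)>0}\diff s + M_Z^N(t),
\end{align*}
with $(M_Y^N(t))$ and $(M_Z^N(t))$ local martingales. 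Since $\overline{Y}_b^N(t){\leq}1$ and $\overline{Z}_b^N(t){\leq}C_2^N/N$ for all $t$, tightness of $(\overline{Y}_b^N,\overline{Z}_b^N)$ together with continuity of its limit points follows verbatim from the argument of Proposition~\ref{proptight}: the previsible increasing processes of the martingales are of order $1/N$, so Doob's inequality, a modulus-of-continuity estimate and Theorem~15.1 of Billingsley~\cite{Billingsley} apply.

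Since the first equation is closed in $\overline{Y}_b^N$, any limit point $(y_b(t))$ solves the linear ordinary differential equation $y_b'(t) = p\mu_{11}(1{-}y_b(t)) - (1{-}p)\mu_{01}y_b(t)$ with $y_b(0)=y_b^0$, whose unique solution is given by~\eqref{eq:y-second-regime}; hence $(\overline{Y}_b^N(t))$ converges in distribution to the deterministic function $(y_b(t))$. For the second coordinate I would write $\overline{Z}_b^N(t) = F_b^N(t) + U^N(t)$, where the free process is $F_b^N(t) = \overline{Z}_b^N(0) + \mu_{02}(C_2^N/N)t - \mu_{02}\int_0^t \overline{Z}_b^N(s)\diff s - \mu_{01}\int_0^t \overline{Y}_b^N(s)\diff s + M_Z^N(t)$ and the reflection term is $U^N(t) = \mu_{01}\int_0^t \overline{Y}_b^N(s)\ind{\overline{Z}_b^N(s)=0}\diff s$, which is non-decreasing, null at $0$ and satisfies $\int_0^{+\infty}\overline{Z}_b^N(s)\diff U^N(s)=0$; thus $(\overline{Z}_b^N(t),U^N(t))$ is the solution of the one-dimensional Skorokhod problem associated with $(F_b^N(t))$ (Skorokhod~\cite{Skorokhod}, Appendix~D of Robert~\cite{Robert}). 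Boundedness makes $F_b^N$ Lipschitz uniformly in $N$ up to the vanishing martingale, so by Proposition~D.4 of Robert~\cite{Robert} the functions $U^N$, and their limit $u$, are Lipschitz. Passing to the limit along a converging subsequence and using that $(\overline{Y}_b^N)\to(y_b)$, one gets $F_b^N\to f_b$ with $f_b(t) = z_b^0 + \mu_{02}rt - \mu_{02}\int_0^t z_b(s)\diff s - \mu_{01}\int_0^t y_b(s)\diff s$, and continuity of the Skorokhod map then shows that $(z_b(t),u(t))$ is the solution of the Skorokhod problem for $(f_b(t))$, i.e.\ Relation~\eqref{eq:Sko2}.

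Finally I would establish uniqueness exactly as in Proposition~\ref{thm:blocY}: the free process $f_b$ depends on $z_b$ only through the functional $\Psi(x)(t) = -\mu_{02}\int_0^t x(s)\diff s$, which satisfies $\|\Psi(a)-\Psi(b)\|_{\infty,t} \le \mu_{02}\int_0^t\|a-b\|_{\infty,s}\diff s$ with $y_b$ now a fixed deterministic function, so the associated generalized Skorokhod problem has a unique solution by Anderson and Orey~\cite{Anderson}; hence all limit points coincide and $(1/N)(Y_b^N(t),Z_b^N(t))$ converges in distribution to $(y_b(t),z_b(t))$. The only genuinely delicate point is this self-dependence of $(f_b(t))$ on $(z_b(t))$, which precludes a direct appeal to the classical Skorokhod map and forces the Lipschitz fixed-point argument of Anderson--Orey, just as for the functional $\overline{G}$ in the previous proposition; everything else is a routine repetition of computations already performed.
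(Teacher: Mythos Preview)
Your proposal is correct and follows essentially the same route as the paper's own proof: the same integrated evolution equations, the same autonomous treatment of $(\overline{Y}_b^N)$ leading to~\eqref{eq:y-second-regime}, the same Skorokhod decomposition $\overline{Z}_b^N=F_b^N+U^N$ with $U^N(t)=\mu_{01}\int_0^t\overline{Y}_b^N(s)\ind{\overline{Z}_b^N(s)=0}\diff s$, and the same appeal to Anderson--Orey~\cite{Anderson} for uniqueness of the generalized Skorokhod problem. Your write-up is in fact more explicit than the paper's, which after setting up the equations simply refers back to the proof of Proposition~\ref{thm:blocY}; your identification of the self-dependence functional $\Psi(x)(t)=-\mu_{02}\int_0^t x(s)\diff s$ and its Lipschitz bound is exactly the missing detail.
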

As before, the free process associated to the Skorokhod problem is 
\[
\left(z_b^0 + \mu_{02} r t - \mu_{02} \int_0^t z_b(s) \diff s - \mu_{01} \int_0^t y_b(s) \diff s\right), 
\]
it is also a functional of $(z_b(t))$, see the proof of Proposition~\ref{thm:blocY}.
\begin{proof}
From the transition rates~\eqref{eq:rates-z} and as in the proof of Proposition~\ref{thm:blocY},  if
\[
\left(\overline{Y}_b^N(t),\overline{Z}_b^N(t)\right) \coloneq
\dfrac{1}{N}\left({Y}_b^N(t),{Z}_b^N(t)\right),
\]
then  one gets the evolution equations
\[
\overline{Y}_b^{N}(t) =\overline{Y}_b^{N}(0)-\mu_{01}(1-p) \int_0^t \overline{Y}_b^N(s) \diff s
+ p\mu_{11}\int_0^t(1-\overline{Y}_b^{N}(s))\diff s + M_Y^N(t),
\]
  \begin{multline*}
    \overline{Z}_b^{N}(t) =\overline{Z}_b^{N}(0) + \mu_{02}\frac{ C_2^N}{N} t - \mu_{02}\int_0^t\overline{Z}_b^N(s)\diff s
\\-\mu_{01}\int_0^t \overline{Y}_b^N(s) \diff s + M_Z^N(t)
+ R^N_Z(t),
  \end{multline*}
  with
  \[
  R^N_Z(t) = \mu_{01} \int_0^t \overline{Y}_b^N(s)\ind{\overline{Z}_b^N(s)=0}\diff s,
  \]
  and $(M_Z^N(t))$ and $(M_Y^N(t))$ are local martingales. It is easily seen that these  two martingales vanish when $N$ gets large and hence that, with the criterion of the modulus of continuity,  the sequence of processes $(\overline{Y}_b^{N}(t))$ is tight. Furthermore, any limiting point  $({y}_b(t))$ satisfies the integral equation
\begin{equation*}
    y_b(t) = y_b^0 - (p \mu_{11} + (1-p) \mu_{01}) \int_0^t y_b(s) \diff s + p \mu_{11} t,
\end{equation*}
so that Relation~\eqref {eq:y-second-regime} holds. 

Clearly, $(\overline{Z}_b^N(t),{R}^N_Z(t))$ is the solution of a generalized Skorokhod process associated with  a free process which  depends itself on $\overline{Z}_b^N(t)$).
One concludes in the same way as in the proof of  Proposition~\ref{thm:blocY}.
\end{proof}
We now prove that,  under Condition~\eqref{eq:cond2}, the reflecting part  of the Skorokhod problem  of the last proposition vanishes for $t$ large enough.
\begin{proposition}\label{prop5}
Under  Condition~\eqref{eq:cond2}, there exists $t_0{\geq} 0$, independent of the initial state~\eqref{initaux2}, such that for $t{\geq}t_0$,
 \begin{equation} \label{eq:z-second-regime}
    \dfrac{\diff}{\diff t} z_b(t) = \mu_{02} r - \mu_{02} z_b(t) - \mu_{01} y_b(t).
  \end{equation}
Furthermore,
\[
\lim_{t\to+\infty} (y_b(t), z_b(t)) =\left(\frac{p \mu_{11}}{p \mu_{11} {+} (1{-}p) \mu_{01}},  r {-} \frac{p \mu_{01}\mu_{11}}{\mu_{02}(p \mu_{11} {+} (1{-}p) \mu_{01})} \right).
\]
\end{proposition}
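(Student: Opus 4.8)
The plan is to follow the blueprint of the proof of Proposition~\ref{thm:diff_first_regime}: first show that the reflection term $(u(t))$ in the Skorokhod problem~\eqref{eq:Sko2} is eventually constant, and then read off the limit from the resulting linear differential equation. Set $\kappa\coloneq p\mu_{11}+(1-p)\mu_{01}$ and $y_\infty\coloneq p\mu_{11}/\kappa$; by the explicit formula~\eqref{eq:y-second-regime} one has $|y_b(t)-y_\infty|=|y_b^0-y_\infty|\,e^{-\kappa t}\leq e^{-\kappa t}$, so $(y_b(t))$ converges to $y_\infty$ at a rate independent of the initial state. The hypothesis of the proposition amounts to the strict inequality $\mu_{01}y_\infty<\mu_{02}r$, i.e.\ to $z_\infty\coloneq r-\mu_{01}y_\infty/\mu_{02}>0$, and one checks that $z_\infty$ equals the value announced in the statement.

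The key step is to bound $(z_b(t))$ from below. I would compare it with the solution $(\tilde z(t))$ of the linear equation $\tilde z'(t)=\mu_{02}r-\mu_{02}\tilde z(t)-\mu_{01}y_b(t)$, $\tilde z(0)=z_b^0$. Subtracting the integrated form of this equation from~\eqref{eq:Sko2}, the difference $D\coloneq z_b-\tilde z$ satisfies $D(t)=-\mu_{02}\int_0^t D(s)\diff s+u(t)$ with $D(0)=0$; solving this linear integral equation yields $D(t)=\int_0^t e^{-\mu_{02}(t-s)}\diff u(s)\geq 0$ because $(u(t))$ is non-decreasing with $u(0)=0$, hence $z_b(t)\geq\tilde z(t)$ for every $t$. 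From the variation-of-constants formula $\tilde z(t)=e^{-\mu_{02}t}z_b^0+\int_0^t e^{-\mu_{02}(t-s)}(\mu_{02}r-\mu_{01}y_b(s))\diff s$ and the uniform convergence of $(y_b(s))$ above, $\tilde z(t)\to z_\infty$ as $t\to+\infty$, uniformly over the admissible initial states in $[0,1]\times[0,r]$. Since $z_\infty>0$ there is therefore $t_0\geq 0$, independent of the initial state, such that $z_b(t)\geq\tilde z(t)>z_\infty/2>0$ for all $t\geq t_0$. (Alternatively one can argue exactly as in Proposition~\ref{thm:diff_first_regime}: if $z_b(t_1)=0$ for some $t_1>t_0$, then by continuity the drift $\mu_{02}r-\mu_{02}z_b(t_1)-\mu_{01}y_b(t_1)=\mu_{02}r-\mu_{01}y_b(t_1)$ of the free process of~\eqref{eq:Sko2} is non-positive at $t_1$, which contradicts $y_b(t_1)<\mu_{02}r/\mu_{01}$, valid once $t_0$ is large enough since $y_b(t)\to y_\infty<\mu_{02}r/\mu_{01}$; the sub-solution $(\tilde z(t))$ is introduced precisely to sidestep this slightly delicate point, which is the only real obstacle in the argument.)

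It remains to conclude. Since $z_b(t)>0$ for $t>t_0$, the complementarity condition $\int_0^{+\infty}z_b(s)\diff u(s)=0$ forces $\diff u\equiv 0$ on $(t_0,+\infty)$, so $(u(t))$ is constant there and, by~\eqref{eq:Sko2}, $(z_b(t))$ is $C^1$ on $(t_0,+\infty)$ and satisfies~\eqref{eq:z-second-regime}. Integrating~\eqref{eq:z-second-regime} by variation of constants gives, for $t>t_0$,
\[
z_b(t)=e^{-\mu_{02}(t-t_0)}z_b(t_0)+\int_{t_0}^t e^{-\mu_{02}(t-s)}\bigl(\mu_{02}r-\mu_{01}y_b(s)\bigr)\diff s,
\]
which converges, as $t\to+\infty$, to $(\mu_{02}r-\mu_{01}y_\infty)/\mu_{02}=z_\infty=r-p\mu_{01}\mu_{11}/(\mu_{02}(p\mu_{11}+(1-p)\mu_{01}))$, again by the convergence of $(y_b(s))$ to $y_\infty$; together with $y_b(t)\to y_\infty=p\mu_{11}/(p\mu_{11}+(1-p)\mu_{01})$ from~\eqref{eq:y-second-regime}, this is the announced limit and completes the proof plan.
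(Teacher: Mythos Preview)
Your proof is correct, and your main argument is genuinely different from the paper's. The paper argues by contradiction via the explicit representation formula for the one-dimensional Skorokhod map: it first fixes $t_0$ so that $\mu_{01}y_b(t)<\mu_{02}r$ for $t\geq t_0$, then assumes $z_b(t_1)=0$ for some $t_1>t_0$ and uses the sup-formula
\[
z_b(t)=\Bigl(\text{free process at }t\Bigr)\vee\sup_{0\leq s\leq t}\Bigl(\mu_{02}r(t{-}s)-\mu_{02}\!\int_s^t z_b-\mu_{01}\!\int_s^t y_b\Bigr)
\]
to deduce, by letting $s\uparrow t_1$, that $\mu_{02}r-\mu_{01}y_b(t_1)\leq 0$, a contradiction. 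This is exactly the ``delicate point'' you allude to in your parenthetical alternative; the paper makes it rigorous precisely through the representation formula, whereas your phrase ``by continuity the drift \ldots\ is non-positive'' would need that extra ingredient to stand on its own.

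Your primary route sidesteps the representation formula entirely: the comparison $z_b\geq\tilde z$ follows from the clean identity $D(t)=\int_0^t e^{-\mu_{02}(t-s)}\diff u(s)\geq 0$, and then the convergence $\tilde z(t)\to z_\infty>0$ (uniform over initial data) gives positivity of $z_b$ after a fixed time. This is more self-contained and yields the extra information that $\tilde z$ is a global sub-solution, not merely that $z_b$ is eventually positive. The paper's argument is shorter once one is willing to invoke the Skorokhod sup-formula; yours trades that citation for an elementary Gronwall-type computation.
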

\begin{proof}
By Proposition~\ref{thm:blocZ}, $(y_a(t))$ converges to $\overline{y}_b$ as $t$ gets large.
By Condition~\eqref{eq:cond2}, the relation  $\overline{y}_{b} {<} \mu_{02} r / \mu_{01}$ holds.
Consequently, there exists $t_0$ such that the inequality
\[
 \mu_{01} y_b(t){<} \mu_{02} r
\]
holds  for $t{\geq}t_0$.

Theorem~D.1 of Robert~\cite{Robert} gives the relation
\begin{multline*}
  z_b(t) = \left(z_b^0 + \mu_{02} r t - \mu_{02} \int_0^t z_b(s) \diff s - \mu_{01} \int_0^t y_b(s) \diff s \right) \\\vee \sup_{0 \leq s \leq t} \left( \mu_{02} r (t{-}s) - \mu_{02} \int_s^t z_b(u) \diff u - \mu_{01} \int_s^t y_b(u) \diff u \right).
\end{multline*}
Suppose that $z_b(t_1){=} 0$ for some $t_1 {>} t_0$, then in particular, for $0{\leq}s{\leq} t_1$,
\[
\mu_{02} r (t_1{-}s) - \mu_{02} \int_s^{t_1} z_b(u) \diff u - \mu_{01} \int_s^{t_1} y_b(u) \diff u\leq 0,
\]
by continuity of $(z_b(t))$ and $(y_b(t))$, this gives that
\[
\mu_{02} r{-}\mu_{01} y_b(t_1)= \mu_{02} r{-}\mu_{02} z_b(t_1){-}\mu_{01} y_b(t_1) \leq 0,
\]
contradiction.
This implies that $z_b(t){>} 0$ holds  for $t {\geq} t_0$ and, consequently, the measure $\diff u(t)$ vanishes on $[t_0,{+}\infty)$. The proposition is proved.
\end{proof}

%!TEX root = ./main.tex
\section{Asymptotic Study of the Blocking Phenomenon} \label{sec:first}
The goal of this section is of showing that if the ratio $r{\sim}C_2^N/C_1^N$ of the capacities of the two levels of our system is less than the quantity 
\[
r_c\steq{def} \frac{p}{\mu_{02}}\left/\left(\frac{p}{\mu_{01}}{+}\frac{1{-}p}{\mu_{11}}\right)\right.,
\]
then there exists some fixed instant $t_0$ such that, with a probability converging to $1$ as $N$ gets large, the number of blocked servers at level~$1$ is of the order of $N$ on any finite time interval after $t_0$. Otherwise, if $r{>}r_c$, then there exists $t_0{>}0$ such that the number of blocked servers is $0$ with high probability on any finite time interval after $t_0$. These results are respectively proved in Sections~\ref{satsec} and~\ref{undersec}. The proofs use the technical tools introduced in the last section and additional probabilistic arguments. 
\subsection{The Overloaded Regime}\label{satsec}
\addcontentsline{toc}{section}{\hspace{2mm}\thesubsection. The Overloaded Regime}
In this section, we will assume that Condition~\eqref{eq:cond1} holds, i.e., that $r{<}r_c$. 
Recall that $(Y_*^N(t),Y^N(t),Z^N(t))$ describes the state of our system.  The following proposition is a technical result that shows that type $0$ jobs occupy at least a  fixed fraction of the first level after some time. 

\begin{proposition} \label{thm:yplusystar}
Under Condition~\eqref{eq:cond1},  for $\eps{>} 0$, there exists $t_0{>}0$ such that, for any initial fluid state~\eqref{ScalingInit}, and  $T{\geq}t_0$, then
    \[
      \lim_{N \to \infty} \P \left( \inf_{t \in [t_0, T]} \big(Y_*^N(t){+}Y^{N}(t)\big) \geq N( \bar{y}{-}\eps )\right) = 1,
    \]
    with
    \[
    \bar{y} \coloneq \frac{p \mu_{11}}{ p \mu_{11} {+} (1{-}p) \mu_{01}}.
    \]
\end{proposition}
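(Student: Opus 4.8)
The key observation is that the quantity $Y_*^N(t) + Y^N(t)$ is the total number of class~$0$ jobs at level~$1$ (blocked or in service), and I want to bound it below by coupling with the auxiliary process $(Y_b^N(t), Z_b^N(t))$ of Section~\ref{sec:auxiliary-free}, whose $y_b$-component converges to exactly $\bar y$. The mechanism to exploit: whenever the true process has $Z^N(t) > 0$ it behaves exactly as the no-blocking auxiliary process, and whenever $Z^N(t) = 0$ the true process \emph{accumulates} blocked jobs rather than discarding completing class~$0$ jobs, so the true total of class~$0$ jobs should dominate $Y_b^N$.

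First I would set up the coupling: run $(Y_b^N, Z_b^N)$ and $(Y_*^N, Y^N, Z^N)$ on the same marked Poisson processes $\markedN{\mu_{01},i}$, $\markedN{\mu_{11},i}$, $\markedN{\mu_{02},i}$, started from the same initial condition (choosing $(y_b^0, z_b^0) = (y_0, z_0)$ in the fluid limit sense). I would then try to prove the pathwise inequality $Y_*^N(t) + Y^N(t) \geq Y_b^N(t)$ for all $t$, by examining the jumps: the only transitions that decrease $Y_*^N + Y^N$ are completions of class~$0$ jobs in service while $Z^N = 0$ (rate $\mu_{01}Y^N$) and... actually they are blocked, not lost, so $Y_*^N + Y^N$ is unchanged; and completions of class~$1$ jobs (rate $\mu_{01}(1-p)Y^N \ind{Z^N>0}$), exactly matched in $Y_b^N$. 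The terms increasing both by~$1$ are class~$1$ departures replaced by class~$0$ arrivals, i.e.\ $\markedN{\mu_{11},i}(\diff t, \{0\})$ for $i \leq N - Y_*^N - Y^N$ versus $i \leq N - Y_b^N$. This is where the comparison is delicate because the index ranges move. I expect to need to track the difference $D^N(t) = Y_*^N(t) + Y^N(t) - Y_b^N(t)$ and show it stays $\geq 0$ by a case analysis showing $D^N$ can only jump up from~$0$ (the standard argument for order-preserving couplings of birth-death-type dynamics: when $D^N = 0$ the $\mu_{11}$-birth rates coincide, so no downward crossing occurs; and the $Z^N = 0$ events only help).

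Granting the pathwise bound $Y_*^N(t) + Y^N(t) \geq Y_b^N(t)$, I would invoke Proposition~\ref{thm:blocZ} to get $\overline{Y}_b^N(t) \to y_b(t)$ in distribution uniformly on compacts, with $y_b(t) \to \bar y$ as $t \to \infty$ by the explicit formula~\eqref{eq:y-second-regime}. Fix $\eps > 0$; choose $t_0$ so that $y_b(t) \geq \bar y - \eps/2$ for all $t \geq t_0$, which is possible \emph{independently of the initial state} since $y_b^0 \in [0,1]$ and the exponential relaxation rate $p\mu_{11} + (1-p)\mu_{01}$ is fixed and $\bar y$ is the limit regardless of $y_b^0$ (one uses $y_b(t) - \bar y = (y_b^0 - \bar y)e^{-(p\mu_{11}+(1-p)\mu_{01})t}$, and $|y_b^0 - \bar y| \leq 1$). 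Then for $T \geq t_0$, the functional convergence gives $\P(\inf_{[t_0,T]} \overline{Y}_b^N(t) \geq \bar y - \eps) \to 1$, and the coupling transfers this to $Y_*^N + Y^N$.

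**Main obstacle.** The crux is establishing the pathwise domination $Y_*^N(t)+Y^N(t) \geq Y_b^N(t)$ rigorously. One subtlety is that in the true process, when a blocked job is released (rate $p\mu_{02}C_2^N$ or $(1-p)\mu_{02}C_2^N$ while $Y_*^N>0$), the total $Y_*^N+Y^N$ changes by $0$ or $+1$ depending on whether the replacement is class~$0$, whereas $Y_b^N$ has no analogous mechanism — so these events only increase $D^N$. But the genuinely annoying point is the $\mu_{11}$-arrival comparison: $N - Y_*^N - Y^N$ versus $N - Y_b^N$, i.e.\ the true process has fewer idle level-$1$ servers (since $Y_*^N + Y^N \geq Y_b^N$ is what we want to prove), so it gets class~$0$ arrivals \emph{less} often — a downward push on $D^N$. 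I would resolve this the usual way: when $D^N(t-) = 0$ the two idle-server counts coincide, so the $\mu_{11}$-events either increment both $Y_*^N+Y^N$ and $Y_b^N$ together or neither, and $D^N$ cannot go negative; meanwhile the $Z^N=0$ blocking events and the blocked-release events can only make $D^N$ jump up. A careful enumeration of all seven transition types of~\eqref{eq:Q_rates} against the five of~\eqref{eq:rates-z}, checking the sign of $\Delta D^N$ in each case under the hypothesis $D^N(t-)=0$, closes the argument; I expect this bookkeeping, rather than any deep idea, to be the main labor.
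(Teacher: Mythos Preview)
There is a sign error in your treatment of blocked-job releases that breaks the coupling. From the transition rates~\eqref{eq:Q_rates}, when $Y_*^N(t-)>0$ and a level-$2$ service completes, the state moves to $(y_*{-}1,y{+}1,0)$ with probability $p$ and to $(y_*{-}1,y,0)$ with probability $1{-}p$, so $Y_*^N{+}Y^N$ changes by $0$ or by $-1$, not by $0$ or $+1$. These events therefore \emph{decrease} $D^N=(Y_*^N{+}Y^N)-Y_b^N$, and your ``no downward crossing at $D^N=0$'' check fails: at a state with $D^N(t-)=0$ and $Y_*^N(t-)>0$, a $\markedN{\mu_{02},i}$-event with mark $1$ sends $D^N$ to $-1$ while $Y_b^N$ is untouched. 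In rate terms, the death rate of $W^N:=Y_*^N{+}Y^N$ on $\{Y_*^N>0\}$ is $(1{-}p)\mu_{02}C_2^N$, which exceeds the death rate $(1{-}p)\mu_{01}Y_b^N$ of $Y_b^N$ whenever $Y_b^N<\mu_{02}C_2^N/\mu_{01}$; the pathwise domination $W^N\geq Y_b^N$ is therefore false in general on the common Poisson processes. A related symptom: your argument never invokes the hypothesis of the proposition.

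The paper does not compare with $(Y_b^N)$. It sets $H^N=N\bar y - W^N$ and bounds the point process of its upward jumps. On $\{H^N\geq \eps N\}$ one has $Y^N\leq (\bar y{-}\eps)N$, and the overload hypothesis (equivalently $r\mu_{02}<\bar y\mu_{01}$) gives $\mu_{02}C_2^N<\mu_{01}(\bar y{+}\eta)N$ for small $\eta$. These two inequalities allow one to dominate \emph{both} death mechanisms of $W^N$---the $\mu_{01}$-driven one on $\{Z^N>0\}$ and the $\mu_{02}$-driven one on $\{Y_*^N>0\}$---by a single Poisson process of rate $(1{-}p)\mu_{01}(\bar y{+}\eta)N$ (identified via Watanabe's characterization, since the two mechanisms act on the disjoint events $\{Z^N>0\}$ and $\{Z^N=0\}$). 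This yields a comparison of $(H^N(t))$ with a scaled ergodic $M/M/1$ queue, from which both a uniform bound on the time to reach $\{H^N<\eps N\}$ and the exponential unlikelihood of subsequent excursions above $2\eps N$ follow. The hypothesis is used precisely to absorb the $\mu_{02}C_2^N$ term that your coupling cannot handle.
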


\begin{proof}
Define, for $t\geq 0$,
\[
H^N(t)=N\bar{y}{-}Y_*^N(t){-}Y^N(t),
\]
we want to show that $H^N(t)$ is, with high probability, below $N\eps$ on any finite time interval after some finite fixed instant. 

The stochastic differential equations~\eqref{eq:integr-ystar-diff} and~\eqref{eq:integr-y-diff} give the relation
\begin{equation} \label{eq:yplusystar}
\diff H^N(t) =  -\sum_{i=1}^{+\infty}\ind{i\leq N(1-\bar{y})+H^{N}(t-)}\markedN{\mu_{11},i}(\diff t{,}\{0\}) +\diff D^N_0(t)
%+\sum_{i=1}^{+\infty} \ind{Y_*^N(t-)>0,i\leq C_2^N}\markedN{\mu_{02},i}(\diff t{,}\{1\})+\ind{i\leq Y^N(t-),Z^N(t-)>0} \markedN{\mu_{01},i}(\diff t{,}\{1\}).
\end{equation}
where  $(D^N_0(t))$ is the process  associated with the positive jumps of this SDE,
\begin{multline*}
D^N_0(t)\steq{def} \sum_{i=1}^{+\infty}\int_0^t \ind{Y_*^N(s)>0,i\leq C_2^N}\markedN{\mu_{02},i}(\diff s{,}\{1\})
\\  +\ind{i\leq Y^N(s),Z^N(s)>0} \markedN{\mu_{01},i}(\diff s{,}\{1\}).
\end{multline*}
In the following we will use repeatedly, without mentioning it explicitly,  the coupling of Poisson process with ordered rates described at the beginning of Section~\ref{sec:init}.

If, for some $T{>}0$, one has that, for all $0{\leq} t{\leq} T$,  $H^N(t){\geq} \lceil \eps N\rceil $, then, on this time interval, $Y^N(t){\leq}N(\bar{y}{-}\eps)$ and, $D^N_0(t){-}D^N_0(s){\leq}D^N_1(t){-}D^N_1(s)$ for all $0{\leq}s{\leq}t$, with
\begin{multline*}
D^N_1(t)\steq{def}\int_0^t\sum_{i=1}^{+\infty} \ind{Y_*^N(s)>0,i\leq C_2^N}\markedN{\mu_{02},i}(\diff s{,}\{1\})
\\\phantom{\sum}+\ind{i\leq N(\bar{y}-\eps),Z^N(s)>0} \markedN{\mu_{01},i}(\diff s{,}\{1\}).
\end{multline*}
By using classical results on superposition and thinning of independent Poisson processes, see Kingman~\cite{Kingman} for example, we have $(D^N_1(t)) \steq{dist} (D^N_2(t))$, with
\[
D^N_2(t)\steq{def} \int_0^t \ind{Y_*^N(s)>0}{\cal N}_{\mu_{02}(1-p)C_2^N}(\diff s)
{+}\ind{Z^N(s)>0} {\cal N}_{\mu_{01}(1-p)N(\bar{y}-\eps)}(\diff s).
\]
It is easily seen that Condition~\eqref{eq:cond1} is equivalent to the relation $r\mu_{02}{<}\bar{y}\mu_{01}$, hence for $\eta{>}0$ there exists some $N_0$ such that
\[
\mu_{02}C_2^N<\mu_{01}(\bar{y}+\eta)N, \quad\text{ for } N{\geq}N_0.
\]
By using this inequality and the relation $\{Y_*^N(t-){>}0\}{\subset}\{Z^N(t-){=}0\}$,  one has
\[
  \int_0^t \ind{Y_*^N(s)>0}{\cal N}_{\mu_{02}(1-p)C_2^N}(\diff s) \leq
  \int_0^t \ind{Z^N(s)=0}{\cal N}_{\mu_{01}(1-p)N(\bar{y}+\eta)}(\diff s).
\]
Moreover, since
\[
  \int_0^t \ind{Z^N(s)>0} {\cal N}_{\mu_{01}(1-p)N(\bar{y}-\eps)}(\diff s) \leq
  \int_0^t \ind{Z^N(s)>0}{\cal N}_{\mu_{01}(1-p)N(\bar{y}+\eta)}(\diff s),
\]
hence $D^N_2(t){-}D^N_2(s){\leq}D^N_3(t){-}D^N_3(s)$, for $0{\leq}s{\leq}t$, with
\begin{equation*}
  D^N_3(t) \steq{def}  \int_0^t \ind{Z^N(s)=0}{\cal N}_{\mu_{01}(1-p)N(\bar{y}+\eta)}(\diff s)
+\ind{Z^N(s)>0} {\cal N}_{\mu_{01}(1-p)N(\bar{y}+\eta),2}(\diff s),
\end{equation*}
where 
\(
  {\cal N}_{\mu_{01}(1-p)N(\bar{y}+\eta)} \text{ and }{\cal N}_{\mu_{01}(1-p)N(\eps+\eta),2}
\)
are two independent Poisson point processes.
The integer valued  process $(D^N_3(t))$ has jumps of size $1$ and it is easily checked that
\[
\left(D^N_3(t){-}\mu_{01}N(1{-}p)(\bar{y}{+}\eta)t\right)
\]
is a martingale. From Watanabe's Theorem, see Watanabe~\cite{Watanabe}, one gets that $(D^N_3(t))$ is a Poisson process on $\R_+$ with rate $\lambda_H N$ with $\lambda_H \steq{def} \mu_{01}(1{-}p)(\bar{y}{+}\eta)$.
% Consequently  the jump process $(D^N_3(t))$ has the same distribution as a Poisson process with rate $\lambda_H N$ with $\lambda_H{=}\mu_{01}(1{-}p)(\bar{y}{+}\eta)$.

If, for all $0{\leq} t{\leq} T$,  $H^N(t){\geq} \lceil \eps N\rceil $ then
\begin{equation}\label{eqp1}
D^N_0(t){-}D^N_0(s)\leq D^N_3(t){-}D^N_3(s) \text{  for } 0{\leq}s{\leq}t,
\end{equation}
where $(D^N_3(t))$ is a Poisson process with rate $\lambda_H N$.

Assume that $H^N(0){\geq}\lceil \eps N\rceil $ and define  the process $(\overline{H}^N\!\!(t))$, with the initial condition $\overline{H}^N\!\!(0){=}H^N(0){-}\lceil \eps N\rceil$ and such that 
\begin{equation} \label{eqaux1}
 \diff \overline{H}^N\!\!(t) =   {\cal N}_{\lambda_H N}(\diff t)-\ind{\overline{H}^N(t-)>0}\sum_{i=1}^{+\infty}\ind{i\leq N(1-\bar{y})+\eps N}\markedN{\mu_{11},i}(\diff t{,}\{0\})
\end{equation}
holds for $t{\geq}0$. 
Clearly, $(\overline{H}^N\!\!(t))$ has the same distribution as $(L_H(Nt))$, where $(L_H(t))$ is the Markov process associated with an $M/M/1$ queue whose arrival and services rates are respectively  
$\lambda_H$ and  $\mu_H \steq{def} p\mu_{11}(1{-}\bar{y}{+}\eps)$. Note that, by definition of $\bar{y}$, 
\[
\lambda_H{-}\mu_H= \mu_{01}(1{-}p)\eta-p\mu_{11}\eps.
\]
One can choose $\eta{>}0$ so that $\lambda_H{-}\mu_H{<}{-}p\mu_{11}\eps/2$,  hence $(L_H(t))$ is an ergodic Markov process in this case. 

Let $T_\eps^N{=}\inf\{t\geq 0 : H^N(t){<} \eps N\}$, in particular  $H^N(t){\geq} \lceil \eps N\rceil $   for $0{\leq} t{\leq} T_\eps^N$. 
From Relations~\eqref{eq:yplusystar}, \eqref{eqp1} and~\eqref{eqaux1}, we get therefore the inequality
\begin{equation}\label{eqaux2}
  {H}^N(t){-}\lceil \eps N\rceil\leq \overline{H}^N\!\!(t),
\end{equation}
for $t{<}T_\eps^N$,  hence  $T_\eps^N{\leq} \tau_L^N$, with
\[
\tau_L^N=\inf\left\{t>0 : \overline{H}^N(t)=0\right\}=\inf\{t>0:L_H(Nt)=0\}. 
\]
Since $\overline{H}^N\!\!(0){\leq}H^N(0){\leq} N$, Proposition~5.16 of Robert~\cite{Robert} gives that, for any $t_0$ such that $t_0{>}{1}/{(\mu_H{-}\lambda_H)}$ then 
\[
\lim_{N\to+\infty}\P\left(T_\eps^N\leq t_0\right)\geq \lim_{N\to+\infty}\P\left(\tau_L^N\leq t_0\right)=1. 
\]
By using the strong Markov property, up to a change of time origin, one can assume that $H^N(0){\leq}\lceil \eps N\rceil$. If $(\overline{H}^N\!\!(t))$ is defined as before with the initial condition  $\overline{H}^N(0){=}0$, then it is not difficult to show that  Relation~\eqref{eqaux2} holds for all $t{\geq}0$. For the excursions of $(H^N(t))$ below $\lceil \eps N\rceil$ this is clear and for the excursions above this level it has just been proved. In particular, for any $T>0$,
\begin{multline*}
 \P\left(\inf_{0\leq t\leq T} Y_*^N(t){+}Y^N(t) \leq N(\bar{y}{-}2\eps)\right)= \P\left(\sup_{0\leq t\leq T} \overline{H}^N(t)\geq 2\eps N\right)\\\leq \P_0\left(\sup_{0\leq t\leq NT} L_H(t)\geq \lfloor \eps N\rfloor\right),
\end{multline*}
and the last quantity is the probability that the hitting time of $\lfloor \eps N\rfloor$ by an $M/M/1$ queue starting from $0$ is less that $NT$. Proposition~5.11 of Robert~\cite{Robert} shows that  this hitting time is of the order of $(\mu_H/\lambda_H)^{\lfloor \eps N\rfloor}$ and therefore exponentially large in $N$ (recall that $\lambda_H{<}\mu_H$). In particular, the last term of the right-hand-side of the above  relation is converging to $0$ as $N$ gets large. The proposition is proved. 
\end{proof}
The above proof relies on the comparisons of point processes associated to the counting processes $(D_i(t))$, $i{\in}\{0,1,2,3\}$. We have, for example, that the point process associated to $(D_0(t))$ is ``smaller'' that the one associated to $(D_1(t))$:  $D^N_0(t){-}D^N_0(s){\leq}D^N_1(t){-}D^N_1(s)$ for all $0{\leq}s{\leq}t$. In the following, for  convenience, we will use the notation $D_0(\diff t){\leq} D_1(\diff t)$. 
\begin{proposition} \label{thm:ystarminusz}
Under Condition~\eqref{eq:cond1}, for any $\varepsilon{>}0$ small enough, there exists  $t_1{>}0$ such that, for any initial fluid state~\eqref{ScalingInit}, and for any $T{\geq}t_1$, the relation 
    \[
        \lim_{N \to \infty} \P \left( \inf_{t \in [t_1,T]} Y_*^N(t) \geq \eps N \right) = 1.
        \]
        holds. 
\end{proposition}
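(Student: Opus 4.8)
The idea is to bootstrap from Proposition~\ref{thm:yplusystar}: we already know that after some fixed time $t_0$, with high probability, $Y_*^N(t){+}Y^N(t){\geq}N(\bar y{-}\eps')$ uniformly on a finite interval, for any small $\eps'$. Fix such a good event $\mathcal{G}^N$ on $[t_0,T']$ for $T'$ slightly larger than the target horizon. On $\mathcal{G}^N$ we will compare $(Y_*^N(t))$ from below with the first coordinate of the saturated auxiliary process of Section~\ref{sec:auxiliary-saturation}, exploiting that the two processes have identical transition rates as long as $Y_*^N{>}0$ (as noted after Relation~\eqref{eq:rates-ystar}). The plan is: (i) start the auxiliary process $(Y_{a*}^N,Y_a^N)$ at time $t_0$ from (a lower bound for) the state of $(Y_*^N,Y^N)$ at $t_0$, coupled so that $Y_{a*}^N(t){\leq}Y_*^N(t)$ for $t{\geq}t_0$; (ii) invoke Proposition~\ref{thm:blocY} and then Proposition~\ref{thm:diff_first_regime}, which under Condition~\eqref{eq:cond1} tell us that the fluid limit $y_{a*}(t)$ is eventually bounded away from $0$ — indeed it converges to $1-\frac{\mu_{02}r}{\mu_{01}}\bigl(\frac{(1-p)\mu_{01}}{p\mu_{11}}+1\bigr){>}0$, which is positive precisely under~\eqref{eq:cond1}; (iii) conclude that for $t_1$ large enough, $Y_{a*}^N(t)/N$ is, with high probability, at least some $\delta{>}0$ on $[t_1,T]$, hence so is $Y_*^N(t)/N$.

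The coupling in step~(i) is the delicate point and is the main obstacle. The auxiliary process has \emph{the same} transitions as $(Y_*^N,Y^N)$ only while $Y_*^N{>}0$; the real process may hit $Y_*^N{=}0$, where its dynamics differ (it then evolves with $z$-transitions governed by Level~2 emptiness). I would handle this by a \textbf{stopping-time / renewal argument}: whenever $Y_*^N$ is away from $0$ the coupling $Y_{a*}^N{\leq}Y_*^N$ is maintained by the usual synchronous coupling of the driving Poisson processes $\markedN{\mu_{01},i}$, $\markedN{\mu_{02},i}$, $\markedN{\mu_{11},i}$ (here one uses that on $\mathcal{G}^N$ the upper process $Y_a^N$ is served faster or the same, and the $\mu_{11}$-input term $N-Y_{a*}^N-Y_a^N$ dominates $N-Y_*^N-Y^N$ once $Y_{a*}^N{\leq}Y_*^N$ and $Y_a^N{\geq}Y^N$ — so one really needs to propagate a two-sided coupling $Y_{a*}^N{\leq}Y_*^N$, $Y_a^N{\geq}Y^N$, which is consistent with the rate structure in~\eqref{eq:Q_rates} versus~\eqref{eq:rates-ystar}). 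If $Y_*^N$ touches $0$, restart the auxiliary process from $0$ at that instant; since by Proposition~\ref{thm:diff_first_regime} the fluid limit started from \emph{any} initial condition in the relevant region is, after a time $t_0$ independent of the start, at least $\delta$, each restart costs only a fixed delay $t_0$, and with high probability only finitely many restarts occur on $[0,T]$ (in fact, once the auxiliary fluid limit is bounded below by $\delta{>}0$ uniformly, $Y_*^N$ cannot return to $0$ on $\mathcal{G}^N$, so the argument closes after at most one restart).

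Concretely, the steps in order: (1) pick $\eps$ small and apply Proposition~\ref{thm:yplusystar} to get $t_0$ and the event $\mathcal{G}^N$ of probability $\to 1$ on which $\overline Y_*^N{+}\overline Y^N{\geq}\bar y{-}\eps$ on $[t_0,T]$. (2) Set $\delta{=}\tfrac12\bigl(1-\frac{\mu_{02}r}{\mu_{01}}(\frac{(1-p)\mu_{01}}{p\mu_{11}}+1)\bigr){>}0$ by~\eqref{eq:cond1}, and using Propositions~\ref{thm:blocY}–\ref{thm:diff_first_regime} choose $t_1{>}t_0$ so that for \emph{every} admissible initial fluid state the auxiliary limit satisfies $y_{a*}(t){\geq}2\delta$ for $t{\geq}t_1{-}t_0$, hence $\P(\overline Y_{a*}^N(t){\geq}\delta,\ \forall t\in[t_1,T])\to1$. (3) Establish the coupling $Y_{a*}^N(t){\leq}Y_*^N(t)$, $Y_a^N(t){\geq}Y^N(t)$ on $[t_0,T]\cap\mathcal{G}^N$ up to the first time $Y_*^N$ hits $0$; if that time is before $t_1$, restart and re-run step~(2) from the new origin (only a bounded number of restarts, each of length $t_0$, with high probability), otherwise on $[t_1,T]$ one has $Y_*^N(t){\geq}Y_{a*}^N(t){\geq}\delta N$. (4) Intersect the three high-probability events and conclude $\P(\inf_{t\in[t_1,T]}Y_*^N(t){\geq}\delta N)\to1$; relabel $\delta N$ as $\eps N$ (or note $\delta$ is a fixed positive constant and the statement with a generic small $\eps$ follows a fortiori after possibly enlarging $t_1$). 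The comparison of Poisson-driven point processes, exactly in the spirit of the $D_i(\diff t)$ manipulations in the previous proof, is what makes step~(3) rigorous.
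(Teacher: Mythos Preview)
Your coupling is pointed in the wrong direction, and this is a genuine gap rather than a detail. The auxiliary process $(Y_{a*}^N,Y_a^N)$ of Section~\ref{sec:auxiliary-saturation} is built under the assumption that Level~2 is always full, i.e.\ every completed class~$0$ job blocks. Compared with the true process, this produces \emph{more} blocking, so the natural (and correct) relation is $Y_{a*}^N(t)\geq Y_*^N(t)$, not $\leq$. Concretely: whenever $Z^N(t-){>}0$, a class~$0$ completion leaves $Y_*^N$ unchanged while $Y_{a*}^N$ jumps up by $1$; so immediately after any restart at a time where $Y_*^N{=}0$ and $Z^N{>}0$, the first class~$0$ completion already gives $Y_{a*}^N{>}Y_*^N$. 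The two processes have identical transitions only while $Y_*^N{>}0$, in which case they are \emph{equal}, not ordered; knowing $Y_{a*}^N$ is large then tells you $Y_*^N$ is large only on the event you are trying to establish. Your restart scheme does not close this loop: $Y_*^N$ can sit at $0$ (with $Z^N{>}0$) for a macroscopic amount of time, forcing continuous restarts, and the assertion ``only a bounded number of restarts with high probability'' is exactly the content of the proposition. The side claim that $N{-}Y_{a*}^N{-}Y_a^N$ dominates $N{-}Y_*^N{-}Y^N$ under $Y_{a*}^N{\leq}Y_*^N$ and $Y_a^N{\geq}Y^N$ is also false in general, since these two inequalities do not control the sign of the sum.

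The paper avoids the coupling issue entirely by working with the \emph{unfolded} coordinate $Y_*^N{-}Z^N$ (cf.\ Figure~\ref{fig:domain}). Combining the SDEs~\eqref{eq:integr-ystar-diff} and~\eqref{eq:integr-z-diff} and using $Y_*^N{\cdot}Z^N{=}0$ gives
\[
\diff\!\left(Y_*^N{-}Z^N\right)(t)=\sum_{i}\ind{i\leq Y^N(t-)}{\cal N}_{\mu_{01},i}(\diff t)-\sum_{i}\ind{i\leq C_2^N{-}Z^N(t-)}{\cal N}_{\mu_{02},i}(\diff t),
\]
an SDE free of the awkward indicators $\ind{Z^N=0}$ and $\ind{Y_*^N>0}$. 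On the event ${\cal A}_N$ of Proposition~\ref{thm:yplusystar}, whenever $Y_*^N(t){<}2\eps N$ one has $Y^N(t){\geq}N(\bar y{-}3\eps)$, hence the increment is bounded below by ${\cal N}_{\mu_N}(\diff t){-}{\cal N}_{\lambda_N}(\diff t)$ with $\mu_N{\sim}\mu_{01}\bar y N$ and $\lambda_N{\sim}\mu_{02}rN$; Condition~\eqref{eq:cond1} is exactly $\mu_{01}\bar y{>}\mu_{02}r$, so an $M/M/1$ comparison (as in the proof of Proposition~\ref{thm:yplusystar}) shows $Y_*^N{-}Z^N$ reaches and then stays above $\eps N$ on $[t_1,T]$ with high probability. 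Finally $Y_*^N{>}0\Rightarrow Z^N{=}0$ converts this into $Y_*^N{\geq}\eps N$. This is where the auxiliary process is legitimately used afterwards, in Theorem~\ref{theoinstab}: once $Y_*^N{>}0$ is secured on $[t_1,T]$, the processes coincide and Propositions~\ref{thm:blocY}--\ref{thm:diff_first_regime} identify the limit.
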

\begin{proof}
From Equations~\eqref{eq:integr-ystar-diff} and~\eqref{eq:integr-z-diff}, we get that 
\begin{equation*}
  \diff \left(Y_*^N{-} Z^N\right)(t) =\sum_{i=1}^{+\infty} \ind{i\leq Y^N(t{-})}{\cal N}_{\mu_{01},i}(\diff t)
  -\sum_{i=1}^{+\infty} \ind{i \leq C_2^N - Z^N(t{-})}{\cal N}_{\mu_{02},i}(\diff t).
\end{equation*}
By Proposition~\ref{thm:yplusystar}, there exists  $t_0$ be such that, for $T{\geq}t_0$, for the events
\[
{\cal A}_N\steq{def}\left\{ \inf_{t \in [t_0, T]}\big(Y_*^N(t) {+} Y^{N}(t)\big) \geq N( \bar{y} {-} \eps )\right\},
\]
the sequence $(\P({\cal A}_N))$ converges to $1$.

Suppose that, for some time $t {\in}(t_0,T)$,   $Y_*^N(t){<}2\eps N$, then, on the event ${\cal A}_N$, $Y^N(t){\geq}N( \bar{y}{-} 3 \eps)$ and consequently
\begin{align}
  \diff \left(Y_*^N{-} Z^N\right)(t) &\geq \sum_{i=1}^{+\infty} \ind{i\leq N( \bar{y} - 3 \eps)}{\cal N}_{\mu_{01},i}(\diff t)
  -\sum_{i=1}^{+\infty} \ind{i \leq C_2^N}{\cal N}_{\mu_{02},i}(\diff t)\label{eqY1}\\
& \steq{dist} {\cal N}_{\mu_N}(\diff t)-{\cal N}_{\lambda_N}(\diff t).\notag
\end{align}
with $\mu_N{=}\lfloor \mu_{01}N ( \bar{y}{-}3 \eps)\rfloor$ and $\lambda_N{=}\mu_{02}C_2^N$.
As noted before, Condition~\eqref{eq:cond1}  is equivalent to the relation $\bar{y}\mu_{01}{-}r\mu_{02}{>}0$. Since
\[
\lim_{N\to+\infty} \frac{\mu_N{-}\lambda_N}{N}=(\mu_{01} \bar{y} - \mu_{02} r) - \mu_{01} 3\eps,
\]
one can find  $0{<}\lambda{<}\mu$ and $\eps$ sufficiently small, such that for $N$ sufficiently large   $\lambda_N{\geq }\lambda N$ and $\mu_N{\leq}\mu N$ hold.   If
\[
T_N=\inf\{t{>}0: Y_*^N(t){\geq}2\eps N\},
\]
then since $\left(Y_*^N{-} Z^N\right)(0){\geq}{-}C_2^N$, Relation~\eqref{eqY1} gives the existence of $t_1{>}0$ such that
\[
\lim_{N\to+\infty}\P(T_N{\leq} t_1)=1.
\]
We now assume that $\left(Y_*^N{-} Z^N\right)(0){=}\lceil 2\eps N\rceil$. By taking $T{>}t_1$, using  Relation~\eqref{eqY1} and the estimates for $\lambda_N$ and $\mu_N$, we get that, on the event ${\cal A}_N$,
\[
\left(Y_*^N{-} Z^N\right)(t)\geq \lceil 2\eps N\rceil- X(Nt)
\]
holds for all $t{>}0$, where $(X(t))$ is an $M/M/1$ queue with arrival [resp.\ service] rate $\lambda$ [resp.\ $\mu$]  starting at $0$. With the same argument as the end of the previous proof, since $\lambda{<}\mu$, we have that, for any $T{>}0$,
\[
\lim_{N\to+\infty} \P\left(\sup_{0\leq t\leq  NT } X(s)\geq \eps N\right)=0,
\]
consequently
\[
\lim_{N\to+\infty}\P\left(\left\{\sup_{0\leq t\leq T} \left(Y_*^N{-} Z^N\right)(t) \geq \lceil \eps N\rceil\right\}\bigcap {\cal A}_N\right)=1.
\]
We conclude the proof by using the fact that  $Y_*^N(t){>}0$  implies that $Z^N(t){=}0$. 
\end{proof}
The following corollary gives a more precise statement concerning the asymptotic behavior of the $Z$-component of the state vector. It is a simple consequence of the fact that  $Y_*^N(t){>}0$ implies $Z^N(t){=}0$.
\begin{corollary} \label{thm:z-is-null}
Under Condition~\eqref{eq:cond1}, there exists $t_1{>}0$ such that
\[
\lim_{N\to+\infty} \P\left(  Z^N(t)=0, \forall t \in [t_1,T] \right)= 1
\]
holds for any initial fluid state~\eqref{ScalingInit}. 
\end{corollary}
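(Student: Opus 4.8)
The plan is to read this off directly from Proposition~\ref{thm:ystarminusz}, using only the structural constraint built into the state space: recall that for every $t{\geq}0$ at least one of $Y_*^N(t)$ and $Z^N(t)$ vanishes, equivalently a point $x{=}(y_*,y,z)$ of $\cal{S}_N$ satisfies $y_*\!\cdot z{=}0$, so that $Y_*^N(t){>}0$ implies $Z^N(t){=}0$. Thus a lower bound of order $N$ on $Y_*^N$ on a time interval immediately forces $Z^N$ to be identically $0$ there.

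Concretely, I would fix some $\eps{>}0$ small enough for Proposition~\ref{thm:ystarminusz} to apply and let $t_1{>}0$ be the associated instant, which by that proposition does not depend on the initial fluid state~\eqref{ScalingInit}. Then, for every $T{\geq}t_1$, the event
\[
\cal{E}_N\coloneq\left\{\inf_{t\in[t_1,T]}Y_*^N(t)\geq\eps N\right\}
\]
satisfies $\lim_{N\to\infty}\P(\cal{E}_N){=}1$. On $\cal{E}_N$ one has $Y_*^N(t)\geq\eps N{>}0$, hence $Z^N(t){=}0$, for every $t\in[t_1,T]$; therefore $\cal{E}_N\subset\{Z^N(t){=}0,\ \forall t\in[t_1,T]\}$ and
\[
\P\left(Z^N(t)=0,\ \forall t\in[t_1,T]\right)\geq\P(\cal{E}_N),
\]
so letting $N\to\infty$ concludes.

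There is essentially no obstacle here: the content of the corollary is entirely contained in Proposition~\ref{thm:ystarminusz}, and the only point to keep track of is that the instant $t_1$ is uniform in the initial fluid state, which is already part of that statement.
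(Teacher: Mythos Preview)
Your proposal is correct and matches the paper's approach: the paper states this corollary immediately after Proposition~\ref{thm:ystarminusz} as ``a simple consequence of the fact that $Y_*^N(t){>}0$ implies $Z^N(t){=}0$'', which is exactly the inclusion $\cal{E}_N\subset\{Z^N(t){=}0,\ \forall t\in[t_1,T]\}$ you spell out.
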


\begin{theorem}[Saturated regime]\label{theoinstab}   Under Condition~\eqref{eq:cond1}, there exists $t_1{>}0$ such that,  for any initial fluid state~\eqref{ScalingInit}, 
any limiting point $(y_{\infty}^*(t), y_{\infty}(t), z_{\infty}(t))$ of  the sequence $(\overline{X}^N(t))$ defined by Relation~\eqref{eqX} satisfies the following relations, for all $t{\geq}t_1$, 
  $z_{\infty}(t){=} 0$ and the differential equations
  \begin{equation*}
  \begin{aligned}
    \frac{\diff {y}_{\infty}^*}{\diff t}(t)&=\mu_{01}y_{\infty}(t)-\mu_{02}r,\\
    \frac{\diff {y}_{\infty}}{\diff t}(t)&=-\mu_{01}y_{\infty}(t)+p(\mu_{02}r+\mu_{11}(1-y_{\infty}^*(t)-y_{\infty}(t))).
  \end{aligned}
  \end{equation*}
hold.
  Furthermore
  \[
  \lim_{t\to+\infty}\left({y}_{\infty}^*(t), {y}_{\infty}(t)\right)= \left(1 {-}  \frac{\mu_{02} r}{\mu_{01}}\left( \frac{(1{-}p)\mu_{01}}{p \mu_{11}} {+} 1 \right), \frac{\mu_{02} r}{\mu_{01}}\right).
  \]
\end{theorem}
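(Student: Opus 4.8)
The plan is to fix an arbitrary limiting point of $(\overline{X}^N(t))$, pass to the limit in the integrated equations~\eqref{eq:integr-ystar} and~\eqref{eq:integr-y}, and exploit the fact that, on an event of probability tending to $1$, every indicator function occurring there is resolved on the interval $[t_1,T]$. Concretely, let $t_1$ be the (initial-state-independent) constant furnished by Proposition~\ref{thm:ystarminusz}, which also yields Corollary~\ref{thm:z-is-null}; fix a small $\eps{>}0$ and $T{\geq}t_1$. By Proposition~\ref{proptight} the sequence is tight and any limiting point is continuous, so along some subsequence $(\overline{X}^{N_k}(t))$ converges in distribution to a continuous process $(y_\infty^*(t),y_\infty(t),z_\infty(t))$. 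Corollary~\ref{thm:z-is-null} gives $\sup_{t\in[t_1,T]}\overline{Z}^{N_k}(t)\to 0$ in probability, hence $z_\infty(t){=}0$ for $t\in[t_1,T]$.

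Next I would introduce the event ${\cal G}_N\steq{def}\{Y_*^N(s)\geq \eps N,\ \forall s\in[t_1,T]\}$, for which $\P({\cal G}_N)\to 1$ by Proposition~\ref{thm:ystarminusz}. Since $Y_*^N(s){>}0$ forces $Z^N(s){=}0$, on ${\cal G}_N$ one has $\ind{\overline{Z}^N(s)=0}{=}1$, $\ind{\overline{Z}^N(s)>0}{=}0$ and $\ind{\overline{Y}_*^N(s)>0}{=}1$ for all $s\in[t_1,T]$. Subtracting Equations~\eqref{eq:integr-ystar} and~\eqref{eq:integr-y} at $t_1$ from the same equations at $t$ then gives, on ${\cal G}_N$ and for $t\in[t_1,T]$,
\begin{align*}
\overline{Y}_*^N(t)&=\overline{Y}_*^N(t_1)+\mu_{01}\int_{t_1}^t\overline{Y}^N(s)\,\diff s-\mu_{02}\frac{C_2^N}{N}(t{-}t_1)+M_{Y_*}^N(t){-}M_{Y_*}^N(t_1),\\
\overline{Y}^N(t)&=\overline{Y}^N(t_1)-\mu_{01}\int_{t_1}^t\overline{Y}^N(s)\,\diff s+p\mu_{02}\frac{C_2^N}{N}(t{-}t_1)\\
&\quad+p\mu_{11}\int_{t_1}^t(1{-}\overline{Y}_*^N(s){-}\overline{Y}^N(s))\,\diff s+M_Y^N(t){-}M_Y^N(t_1),
\end{align*}
i.e. the reflection-free versions of the evolution equations.

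I would then pass to the limit in these identities. Using the Skorokhod representation theorem (and extracting a further subsequence so that ${\cal G}_{N_k}$ and the martingale terms converge almost surely), together with $C_2^{N_k}/N_k\to r$, Relation~\eqref{eq:martingale} applied to $M_{Y_*}^N$ and $M_Y^N$, and the uniform boundedness of $(\overline{Y}_*^N(t))$ and $(\overline{Y}^N(t))$, one obtains, for $t\in[t_1,T]$,
\begin{align*}
y_\infty^*(t)&=y_\infty^*(t_1)+\mu_{01}\int_{t_1}^t y_\infty(s)\,\diff s-\mu_{02}r(t{-}t_1),\\
y_\infty(t)&=y_\infty(t_1)-\mu_{01}\int_{t_1}^t y_\infty(s)\,\diff s+p\mu_{02}r(t{-}t_1)+p\mu_{11}\int_{t_1}^t(1{-}y_\infty^*(s){-}y_\infty(s))\,\diff s.
\end{align*}
Since $y_\infty^*$ and $y_\infty$ are continuous, these integral identities show they are $C^1$ on $(t_1,T)$ and satisfy the announced differential system; as $T\geq t_1$ is arbitrary and $t_1$ does not depend on the initial fluid state, the system holds for all $t\geq t_1$.

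Finally, this is exactly the affine system~\eqref{eq:diff_first_regime} of Proposition~\ref{thm:diff_first_regime}, so the remark following that proposition gives convergence of $(y_\infty^*(t),y_\infty(t))$ to the stated point; alternatively one checks directly that the associated $2{\times}2$ matrix has trace $-(\mu_{01}{+}p\mu_{11}){<}0$ and determinant $p\mu_{01}\mu_{11}{>}0$, hence is Hurwitz, its unique equilibrium (obtained by setting the right-hand sides to zero, Condition~\eqref{eq:cond1} only serving to place the equilibrium in $[0,1]^2$) being globally attracting. I expect the main obstacle to be the bookkeeping in this passage to the limit conditioned on ${\cal G}_N$: one must argue rigorously that the limiting process inherits the reflection-free equations even though these hold only on an event of probability tending to, but not equal to, $1$, which is precisely where the extra subsequence extraction and the Skorokhod coupling enter; identifying the indicators and analysing the resulting linear ODE are then routine.
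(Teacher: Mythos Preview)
Your argument is correct and takes a genuinely different route from the paper's proof.

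The paper does not pass to the limit directly in the evolution equations. Instead, it couples the original process with the auxiliary ``saturated'' process $(Y_{a*}^N,Y_a^N)$ of Section~\ref{sec:auxiliary-saturation}: since the two Markov processes share the same $Q$-matrix whenever $Y_*^N{>}0$, on the event ${\cal E}_N=\{Y_*^N(t)\geq\eps_0 N,\ t\in[t_1,T]\}$ one may identify $(Y_*^N,Y^N)$ on $[t_1,T]$ with $(Y_{a*}^N,Y_a^N)$ started from $(Y_*^N(t_1),Y^N(t_1))$, and then invoke Proposition~\ref{thm:blocY} (whose proof required the generalized Skorokhod problem and the Anderson--Orey uniqueness argument) to identify the limit.

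Your approach bypasses the auxiliary process entirely: once Proposition~\ref{thm:ystarminusz} guarantees that $Y_*^N$ stays bounded away from $0$ on $[t_1,T]$ with high probability, every indicator in Equations~\eqref{eq:integr-ystar}--\eqref{eq:integr-y} is resolved there, the equations become reflection-free, and one can pass to the limit by elementary means (Skorokhod representation plus Borel--Cantelli to make $\ind{{\cal G}_{N_k}}\to 1$ almost surely along a further subsequence). The resulting affine ODE is then analyzed directly. This is more self-contained for the present theorem and avoids the Skorokhod-problem machinery of Section~\ref{sec:auxiliary-saturation}; the paper's route, by contrast, is more modular in that it recycles Proposition~\ref{thm:blocY}, at the cost of having developed that proposition (and implicitly needing, as in Proposition~\ref{thm:diff_first_regime}, that the reflection term vanishes on $[t_1,T]$ to pass from the Skorokhod formulation to the ODE).
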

\begin{proof}
  By Proposition~\ref{thm:ystarminusz}, for some $\eps_0$ sufficiently small, there exists $t_1$ such that, for any $T{>}t_1$, the event
  \[
{\cal E}_N\steq{def}\left\{ Y_*^N(t){\geq} \eps_0 N: \forall t\in[t_1,T]\right\}
  \]
  has a  probability arbitrarily close to~$1$ as $N$ gets large. Consider  the process $(Y_{a*}^N(t),Y_{a}^N(t))$ defined in Section~\ref{sec:auxiliary-saturation} with initial state $(Y_{*}^N(t_1),Y^N(t_1))$, then by checking $Q$-matrix of both processes, it is easily seen that, on the event ${\cal E}_N$, the relation
  \[
\left(  (Y_{*}^N(t),Y^N(t)),t_1{\leq}t{\leq}T\right)\steq{dist}\left(  (Y_{a*}^N(t),Y_{a}^N(t)),t_1{\leq}t{\leq}T\right)
\]
holds. By using the fact that the sequence of random variables
\[
\left(\frac{1}{N}(Y_{*}^N(t_1),Y^N(t_1))\right)\in[0,1]
\]
is tight, one has only to use Proposition~\ref{thm:blocY} to conclude the proof of the theorem. 
\end{proof}
%!TEX root = ./main.tex
\subsection{The Underloaded Regime\label{undersec}}
 \addcontentsline{toc}{section}{\hspace{2mm}\thesubsection. The Underloaded Regime}
 In this section, it will be assumed that Condition~\eqref{eq:cond2} holds. 
\begin{proposition} \label{thm:majorize-y-ystar}
Under Condition~\eqref{eq:cond2}, there exists $\eta_0{>}0$ and  $t_1{>}0$ such that,  for any initial fluid state~\eqref{ScalingInit} and for $T{>}t_1$, 
    \[
      \lim_{N \to \infty} \P \left( \sup_{t \in [t_1, T]} \left(Y_*^N(t){+} Y^N(t)\right) \leq N\left(\underline{y}{-} \eta_0\right)\right) = 1,
      \]
      with $\underline{y}{=}r{\mu_{02}}/{\mu_{01}}$.
\end{proposition}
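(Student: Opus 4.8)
The plan is to track the total number $S^N(t)\coloneq Y_*^N(t){+}Y^N(t)$ of class-$0$ jobs present at level~$1$ and to follow the pattern of the proof of Proposition~\ref{thm:yplusystar}. Adding~\eqref{eq:integr-ystar-diff} and~\eqref{eq:integr-y-diff}, the class-$0$ completions occurring while level~$2$ is full cancel out, and $(S^N(t))$ becomes a process that jumps up by one at rate $\mu_{11}p(N{-}S^N)$ and down by one at rate $(1{-}p)\bigl[\mu_{01}Y^N\ind{Z^N>0}{+}\mu_{02}C_2^N\ind{Y_*^N>0}\bigr]$. As in that proof, Condition~\eqref{eq:cond2} is equivalent to $\overline y\mu_{01}{<}\mu_{02}r$, where $\overline y\coloneq p\mu_{11}/(p\mu_{11}{+}(1{-}p)\mu_{01})$, so that $\overline y{<}\underline y$; one may assume $\underline y{<}1$, the statement being trivial otherwise since $S^N{\leq}N$. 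Fix $\eta{>}0$ and $\eta_0{\in}(0,\underline y{-}\overline y)$ small enough that $\lambda\coloneq\mu_{11}p(1{-}\underline y{+}\eta_0)$ is strictly smaller than $\mu\coloneq(1{-}p)\bigl(\mu_{01}(\underline y{-}\eta_0)\wedge\mu_{02}(r{-}\eta)\bigr)$ --- possible because at $\eta{=}\eta_0{=}0$ both inequalities $\mu_{11}p(1{-}\underline y){<}(1{-}p)\mu_{01}\underline y$ and $\mu_{11}p(1{-}\underline y){<}(1{-}p)\mu_{02}r$ reduce to $\overline y{<}\underline y$ --- and take $N_0$ with $C_2^N{\geq}(r{-}\eta)N$ for $N{\geq}N_0$.

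The central observation is that, on the event $\{S^N(t-){\geq}N(\underline y{-}\eta_0)\}$, the up-jumps of $S^N$ are, by thinning of the marked Poisson processes and the coupling with ordered rates of Section~\ref{sec:init}, dominated by those of a Poisson process of rate $\lambda N$, whereas the down-jumps dominate those of a Poisson process of rate $\mu N$ \emph{except when the state lies on the ``corner'' $\{Y_*^N{=}Z^N{=}0\}$}: if $Z^N{>}0$ then $Y_*^N{=}0$, hence $Y^N{=}S^N{\geq}N(\underline y{-}\eta_0)$ and the $\mu_{01}$-term has rate ${\geq}(1{-}p)\mu_{01}N(\underline y{-}\eta_0){\geq}\mu N$; if $Y_*^N{>}0$ the $\mu_{02}$-term has rate $(1{-}p)\mu_{02}C_2^N{\geq}(1{-}p)\mu_{02}(r{-}\eta)N{\geq}\mu N$. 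Granting for a moment that the corner can be ignored, one proceeds exactly as in Proposition~\ref{thm:yplusystar}: $\bigl(S^N(t){-}\lceil N(\underline y{-}\eta_0)\rceil\bigr)^+$ is dominated by $(X(Nt))$, where $(X(t))$ is the $M/M/1$ queue with arrival rate $\lambda$ and service rate $\mu{>}\lambda$. Started from a value ${\leq}N$, Proposition~5.16 of Robert~\cite{Robert} yields a fixed $t_1$, independent of the initial fluid state~\eqref{ScalingInit}, with $\P(S^N(t_1){\leq}N(\underline y{-}\eta_0)){\to}1$; and on $[t_1,T]$ the overshoot is dominated by $(X(Nt))$ started from $0$, whose maximum over $[0,NT]$ stays below $\eps N$ with probability tending to~$1$ for every $\eps{>}0$, by Proposition~5.11 of Robert~\cite{Robert} (the hitting time of $\lfloor\eps N\rfloor$ being exponentially large in $N$). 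Taking $\eps{=}\eta_0/2$ gives $\sup_{t\in[t_1,T]}(Y_*^N(t){+}Y^N(t)){\leq}N(\underline y{-}\eta_0/2)$ with probability tending to~$1$, which is the desired statement.

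The main obstacle is to justify ignoring the corner $\{Y_*^N{=}Z^N{=}0\}$, on which the down-rate of $S^N$ vanishes and $S^N$ may drift up at rate $\leq\lambda N$: what is needed is that, while $S^N{\geq}N(\underline y{-}\eta_0)$, the excursions of $(X^N(t))$ onto this set add only $o_P(N)$ to $S^N$ over $[0,T]$, so that the domination above holds up to a negligible additive correction. I would prove this by an excursion analysis. The exit rate from the corner is $\mu_{01}Y^N{+}\mu_{02}C_2^N$, which is of order $N$ when $S^N{\geq}N(\underline y{-}\eta_0)$, so each sojourn there lasts $O_P(1/N)$ and increases $S^N$ by $O_P(1)$, the process leaving to $\{Z^N{>}0\}$ or to $\{Y_*^N{>}0\}$; moreover, when $S^N{<}N\underline y$ the coordinate $Z^N$ has a restoring drift of order $N$ away from $0$ (its death rate $\mu_{01}S^N$ being then below its near-boundary birth rate, of order $\mu_{02}rN$), $Y_*^N$ likewise returns quickly to $0$, and from the corner the process moves into $\{Z^N{>}0\}$ with probability bounded away from~$0$ --- so only $O_P(1)$ corner visits occur while $S^N{<}N\underline y$; while $S^N{\in}[N\underline y,N]$, the drift of $S^N$ off the corner is ${\leq}(\lambda{-}\mu)N{<}0$, so this band is crossed downward in time $O_P(1)$ and permits only $O_P(1)$ further corner visits. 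Altogether the corner contributes $O_P(1)$ to $S^N$, which is negligible against $N$, and the argument of the previous paragraph goes through.
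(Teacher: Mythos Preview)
Your main line---compare $S^N=Y_*^N{+}Y^N$ with a sped-up ergodic $M/M/1$ queue on the event $\{S^N\geq N(\underline y-\eta_0)\}$, then use Propositions~5.11 and~5.16 of~\cite{Robert}---is exactly the paper's strategy. The paper, however, does \emph{not} single out the corner: after coupling $\markedN{\mu_{02}C_2^N}$ with $\sum_{i\le N(\underline y-\eta)}\markedN{\mu_{01},i}$ it asserts
\[
\ind{Y_*^N(t-)>0}\,\markedN{\mu_{02}C_2^N}(\diff t,\{1\})\ \ge\ \ind{Z^N(t-)=0}\sum_{i\le N(\underline y-\eta)}\markedN{\mu_{01},i}(\diff t,\{1\}),
\]
invoking ``$Y_*^N{>}0\Rightarrow Z^N{=}0$''. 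But that implication gives $\ind{Y_*^N>0}\le\ind{Z^N=0}$, so the displayed inequality fails precisely on $\{Y_*^N{=}Z^N{=}0\}$; combined with the (correct) bound on $\{Z^N{>}0\}$ this yields the clean comparison the paper states. In other words, the paper's proof simply ignores the corner, and the issue you raise is a genuine gap there as well.

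Your excursion argument, however, does not close it. Two points fail as written. First, ``while $S^N\in[N\underline y,N]$ only $O_P(1)$ corner visits'': in that band the decrement rate of $Z^N$ near $0$ is $\mu_{01}Y^N$, which can equal or exceed $\mu_{02}C_2^N$, so $Z^N$ has no restoring drift and the process may hit the corner $\Omega(N)$ times in the $O(1)$ time spent there---each visit contributing an $O(1)$ increment to $S^N$, hence a non-negligible $O(N)$ total. Second, the claim ``from the corner the process moves into $\{Z^N>0\}$ with probability bounded away from~$0$, so only $O_P(1)$ corner visits while $S^N<N\underline y$'' conflates the one-step exit probability with a bound on the total number of returns; turning the positive drift of $Z^N$ into a geometric bound on returns requires a coupling with a fixed-rate birth--death chain, and you have not set this up (the rates depend on $Y^N$, which itself moves). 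As it stands, the corner contribution to $S^N$ is not shown to be $o_P(N)$, so the $M/M/1$ domination is not established.
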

Note that $\underline{y}{<}1$ by Condition~\eqref{eq:cond2}.
\begin{proof}
By using the SDEs~\eqref{eq:integr-ystar-diff} and~\eqref{eq:integr-y-diff}, we get that 
\begin{multline}\label{f1}
  \diff \left(Y_*^N{+}Y^N\right)(t) = \sum_{i=1}^{+\infty}\ind{i\leq N-Y_*^N(t-)-Y^{N}(t-)}\markedN{\mu_{11},i}(\diff t, \{0\}) \\
 -  \ind{Y_*^N(t-)>0}\markedN{\mu_{02}C_2^N}(\diff t, \{1\})
 -  \sum_{i=1}^{+\infty}\ind{i\leq Y^N(t-), Z^N(t-)>0} \markedN{\mu_{01},i}(\diff t, \{1\})
\end{multline}
holds.  The strategy of the proof is of deriving an upper bound for the process $(Y_*^N(t){+}Y^N(t))$, as before we will work on the differential terms of the above relation.

We choose $\eta{>}0$ sufficiently small so that for $N$ large enough the relation
\[
\frac{\mu_{02}}{\mu_{01}} C_2^N > \left\lfloor N\left(r\frac{\mu_{02}}{\mu_{01}} {-}\eta \right)\right\rfloor 
  \]
  holds.  Under this condition one has, with a convenient coupling of Poisson processes,
\begin{equation}\label{f2}
\markedN{\mu_{02}C_2^N}(\diff t, \diff u )\geq \sum_{i=1}^{+\infty}    \ind{i \leq N (\mu_{02} r / \mu_{01} - \eta)} \markedN{\mu_{01},i}(\diff t, \diff u).
\end{equation}
The relation $Y_*^N(t-){>}0$ implies $Z^N(t-){=}0$, consequently, we get the inequality
\begin{equation*}
\ind{Y_*^N(t-)>0}\markedN{\mu_{02}C_2^N}(\diff t, \{1\})
\geq \ind{Z^N(t-)=0} \sum_{i=1}^{+\infty}    \ind{i \leq N (\mu_{02} r / \mu_{01} - \eta)} \markedN{\mu_{01},i}(\diff t, \{1\}).
\end{equation*}
If the relation $Y^N(t-){+} Y_*^N(t-) {\geq}N(\mu_{02} r /\mu_{01}{-} \eta)$ holds, then
\begin{multline}\label{f3}
\ind{Z^N(t-)>0} \sum_{i=1}^{+\infty} \ind{i\leq Y^N(t-)}\markedN{\mu_{01},i}(\diff t, \{1\}) \\
  \geq  \ind{Z^N(t-)>0}\sum_{i=1}^{+\infty}\ind{i \leq N (\mu_{02} r / \mu_{01} - \eta)} \markedN{\mu_{01},i}(\diff t, \{1\}),
\end{multline}
since $Y_*^N(t{-}){=}0$ if $Z^N(t{-}){>}0$.

By plugging Relations~\eqref{f2} and~\eqref{f3} into the SDE~\eqref{f1}, we get that
\begin{equation}\label{f4}
 \diff\left(Y_*^N{+}Y^N\right)(t) \leq {\cal N}_{\lambda_N}(\diff t) -{\cal N}_{\mu_N}(\diff t) 
\end{equation}
holds on the event $Y^N(t-){+} Y_*^N(t-) {\geq}N(\mu_{02} r /\mu_{01}{-} \eta)$, with 
\[
(\lambda_N,\mu_N)\steq{def}\left(\left\lfloor p\mu_{11} N \left(1- r\frac{\mu_{02}}{\mu_{01}} + \eta\right)\right\rfloor,\left\lfloor  (1-p) \mu_{01} N \left(r\frac{\mu_{02}}{\mu_{01}} - \eta\right)\right\rfloor \right). 
\]
By Condition~\eqref{eq:cond2} we can take $\eta{=}\eta_0{>}0$ to be such that
\[
  2\eta_0 < r\dfrac{\mu_{02}}{\mu_{01}} - \dfrac{p \mu_{11}}{p \mu_{11} + (1-p) \mu_{01}},
\]
In this case if  $({\lambda},{\mu})$ is the limit of the sequence $((\lambda_N,\mu_N)/N)$, then ${\lambda}{<}{\mu}$. There exist $0{<}\lambda_0{<}\mu_0$, such that, for $N$ sufficiently large, the relations $\lambda_N{\leq}\lambda_0 N$ and $\mu_N{\geq}\mu_0 N $ hold. Let $(X(t))$ be the (ergodic) $M/M/1$ queue with input [resp.\ service] rate given by $\lambda_0$ [resp. $\mu_0$] and $X(0){=}N$, then Equation~\eqref{f4} gives the coupling relation $(Y_*^N{+}Y^N)(t){\leq}X(Nt)$ for $t$ less than the  hitting time of $N(\mu_{02} r /\mu_{01}{-} 2\eta_0)$.  Consequently, by ergodicity,  there exists some $t_1{\geq}0$ such that this hitting time is, with high probability, less than $t_1N$. Now, by taking  the initial conditions $(Y_*^N{+}Y^N)(0){=}X(0){=}\lceil N(\mu_{02} r /\mu_{01}{-} 2\eta_0)\rceil$, with the same argument as in the proof of Proposition~\ref{thm:ystarminusz}, one gets that, for $T{>}0$,  the process $((Y_*^N{+}Y^N)(t))$ remains below $N(\mu_{02} r /\mu_{01}{-} \eta_0)$ on the time interval $[0,T]$ with high probability. The proposition is proved. 
\end{proof}
The following result is the analogue of Proposition~\ref{thm:ystarminusz} for the underloaded regime. 
\begin{proposition} \label{thm:z-minus-y-star}
Under Condition~\eqref{eq:cond2}, for any $\eps{>}0$ small enough, there exists a time $t_1{\geq}0$ such that,   for any initial fluid state~\eqref{ScalingInit} and for $T{>}t_1$,
\[
        \lim_{N \to \infty} \P \left( \inf_{t \in [t_1,T]} (Z^{N}(t){-}Y_*^N(t)) \geq \eps N \right) = 1.
\]
\end{proposition}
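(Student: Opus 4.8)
The plan is to mirror the proof of Proposition~\ref{thm:ystarminusz}, exchanging the roles of $Z^N$ and $Y_*^N$ and using Proposition~\ref{thm:majorize-y-ystar} in place of Proposition~\ref{thm:yplusystar}. Set $W^N(t)\coloneq Z^N(t){-}Y_*^N(t)$; since $Y_*^N Z^N\equiv 0$, proving $\inf_{[t_1,T]}W^N\geq\eps N$ with high probability is the claim (equivalently: no blocked jobs at level~1 and a fixed fraction of free servers at level~2). Subtracting the stochastic differential equations~\eqref{eq:integr-ystar-diff} and~\eqref{eq:integr-z-diff}, and using that $Y_*^N(t{-})Z^N(t{-}){=}0$ (so $C_2^N{-}Z^N(t{-}){=}C_2^N$ whenever $Y_*^N(t{-}){>}0$), one obtains the identity
\[
\diff W^N(t)=\sum_{i=1}^{+\infty}\ind{i\leq C_2^N-Z^N(t-)}{\cal N}_{\mu_{02},i}(\diff t)-\sum_{i=1}^{+\infty}\ind{i\leq Y^N(t-)}{\cal N}_{\mu_{01},i}(\diff t),
\]
which is the negative of the increment of $Y_*^N{-}Z^N$ computed at the beginning of the proof of Proposition~\ref{thm:ystarminusz}. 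By Proposition~\ref{thm:majorize-y-ystar}, there are $\eta_0{>}0$ and $t_0{>}0$, independent of the initial state, such that for $T{\geq}t_0$ the event ${\cal E}_N\coloneq\{\sup_{t\in[t_0,T]}(Y_*^N(t){+}Y^N(t))\leq N(\underline{y}{-}\eta_0)\}$ has probability tending to $1$; on ${\cal E}_N$ one has $Y^N(t)\leq N(\underline{y}{-}\eta_0)$ for $t\in[t_0,T]$.

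I would then establish the key drift estimate. The crucial observation is that $W^N(t{-}){<}2\eps N$ forces $Z^N(t{-}){<}2\eps N$: if $Z^N(t{-}){>}0$ then $Y_*^N(t{-}){=}0$, so $Z^N(t{-}){=}W^N(t{-}){<}2\eps N$, and the bound is trivial when $Z^N(t{-}){=}0$. Hence, on ${\cal E}_N$ and while $W^N{<}2\eps N$, the positive jumps of $W^N$ occur at rate $\mu_{02}(C_2^N{-}Z^N(t{-}))\geq\mu_{02}(C_2^N{-}\lceil 2\eps N\rceil)$ and the negative jumps at rate $\mu_{01}Y^N(t{-})\leq\mu_{01}\lfloor N(\underline{y}{-}\eta_0)\rfloor$. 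Using the coupling of Poisson processes with ordered rates together with the thinning and superposition properties, it follows that, on ${\cal E}_N$ and while $W^N{<}2\eps N$, the increments of $W^N$ dominate in distribution those of ${\cal N}_{\tilde\mu_N}(\diff t){-}{\cal N}_{\tilde\lambda_N}(\diff t)$, with $\tilde\mu_N\coloneq\mu_{02}(C_2^N{-}\lceil 2\eps N\rceil)$ and $\tilde\lambda_N\coloneq\mu_{01}\lfloor N(\underline{y}{-}\eta_0)\rfloor$. Since $\underline{y}{=}r\mu_{02}/\mu_{01}$ and $C_2^N/N{\to}r$, one has $(\tilde\mu_N{-}\tilde\lambda_N)/N\to\mu_{01}\eta_0{-}2\mu_{02}\eps$, which is strictly positive provided $\eps$ is small enough relative to the fixed $\eta_0$ of Proposition~\ref{thm:majorize-y-ystar}; this is the only place where the smallness of $\eps$ enters. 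Fixing such an $\eps$, there are $0{<}\lambda{<}\mu$ with $\tilde\lambda_N{\leq}\lambda N$ and $\tilde\mu_N{\geq}\mu N$ for $N$ large.

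From here the argument is the one used at the end of the proofs of Propositions~\ref{thm:ystarminusz} and~\ref{thm:majorize-y-ystar}. Since $W^N(t_0){\geq}{-}N$, the lower bound above shows that, on ${\cal E}_N$, $W^N$ dominates a free random walk with positive drift of order $N$ as long as it stays below $2\eps N$, which forces $\P(T_N{\leq}t_1){\to}1$ for $T_N\coloneq\inf\{t{\geq}t_0:W^N(t){\geq}2\eps N\}$ and a suitable fixed $t_1{>}t_0$. Then, by the strong Markov property at $T_N$ one may assume $W^N(0){=}\lceil 2\eps N\rceil$, and repeating the comparison on the successive excursions of $W^N$ below $2\eps N$ gives, on ${\cal E}_N$, $W^N(t)\geq\lceil 2\eps N\rceil{-}X(Nt)$ for all $t\geq 0$, where $(X(t))$ is an ergodic $M/M/1$ queue with arrival rate $\lambda$ and service rate $\mu$ started at $0$. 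Since $\lambda{<}\mu$, the hitting time of $\lceil\eps N\rceil$ by $(X(t))$ is exponentially large in $N$, so $\sup_{[0,NT]}X{<}\eps N$ with high probability, and therefore $\inf_{t\in[t_1,T]}W^N(t)\geq\eps N$ with high probability, as claimed. The step I expect to require the most care is the drift estimate of the second paragraph: one must exploit the constraint $Y_*^N Z^N\equiv 0$ to rule out the ``bad'' range of $C_2^N{-}Z^N$ precisely when $W^N$ is small, and then close the inequality by balancing the resulting positive drift against the a priori upper bound $Y^N\leq N(\underline{y}{-}\eta_0)$ furnished by Proposition~\ref{thm:majorize-y-ystar}; the remaining steps are bookkeeping along the lines of the earlier proofs.
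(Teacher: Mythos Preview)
Your proposal is correct and follows essentially the same route as the paper's proof: derive the SDE for $Z^N{-}Y_*^N$, use Proposition~\ref{thm:majorize-y-ystar} to bound $Y^N$ from above, exploit the constraint $Y_*^N Z^N{=}0$ to bound $Z^N$ whenever $W^N$ is small, and then run the $M/M/1$ comparison exactly as in Propositions~\ref{thm:yplusystar} and~\ref{thm:ystarminusz}. Your handling of the smallness of $\eps$ (namely $\eps$ small relative to the fixed $\eta_0$, so that $\mu_{01}\eta_0{-}2\mu_{02}\eps{>}0$) is in fact cleaner than the paper's own write-up, which states the corresponding inequality with the roles of $\eta_0$ and $\eps$ reversed.
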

\begin{proof}
Since the proof follows the same lines as in the proof of Proposition~\ref{thm:ystarminusz}, we sketch the main technical arguments. 
  From the last proposition, one can  chose $\eta_0$ and $t_1{\geq}0$ such that the event
\[
{\cal B}_N\steq{def}\left\{ \sup_{t\in [t_1,T]} (Y_*^N(t){+}Y^N(t)) \leq N\left(r\frac{\mu_{02}}{\mu_{01}}- \eta_0\right) \right\}
\]
has a probability  converging  to $1$ when $N$ gets large. 

The SDEs~~\eqref{eq:integr-y-diff} and~\eqref{eq:integr-z-diff} give the relation
\begin{equation*}
  \diff \left(Z^{N}{-}Y_*^N\right)(t) = \sum_{i=1}^{+\infty} \ind{i \leq C_2^N{-}Z^N(t^-)}{\cal N}_{\mu_{02},i}(\diff t)
  - \sum_{i=1}^{+\infty} \ind{i\leq Y^N(t-)}{\cal N}_{\mu_{01},i}(\diff t),
\end{equation*}
by using again that $Z^N(t{-})$ is null if $Y_*^N(t{-})$ is positive.

One takes  $\eta_0{<}\mu_{02}\eps /(4\mu_{01})$ then, 
on the event ${\cal B}_N$, if  $Z^N(t) {\leq} \eps N$, 
\begin{multline*}
  \diff\left(Z^{N}{-}Y_*^N\right)(t) \geq \sum_{i=1}^{+\infty} \ind{i \leq C_2^N - \eps N}{\cal N}_{\mu_{02},i}(\diff t)
\\  - \sum_{i=1}^{+\infty} \ind{i\leq N \mu_{02}(r - \eps/2) / \mu_{01}}{\cal N}_{\mu_{01},i}(\diff t).
\end{multline*}
Hence,  the  process $(Z^N{-}Y_*^N(t))$ can be compared with a (scaled) ergodic  $M/M/1$ queue with arrival rate $\lfloor C_2^N{-}\eps N\rfloor \mu_{02}$ and service rate $\lfloor N(r{-}\eps/2)\mu_{02}/\mu_{01}\rfloor \mu_{01}$.
We conclude the proof in the same way as in the proof  of Proposition~\ref{thm:yplusystar}.
\end{proof}
\begin{corollary} \label{thm:y-star-is-null}
Under Condition~\eqref{eq:cond2}, there exists $t_1{>}0$ such that
\[
\lim_{N\to+\infty} \P\left(  Y_*^N(t)=0, \forall t \in [t_1,T] \right)= 1
\]
holds for any initial fluid state~\eqref{ScalingInit}. 
\end{corollary}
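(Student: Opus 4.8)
The heavy lifting here has already been done in Proposition~\ref{thm:z-minus-y-star}, so the plan is to extract the corollary as a short deduction from that estimate together with the structural constraint of the state space. Recall that for every $t\geq 0$ one has $Y_*^N(t)\cdot Z^N(t)=0$, i.e.\ at least one of the two coordinates vanishes; in particular $Y_*^N(t)>0$ forces $Z^N(t)=0$.

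First I would fix $\eps{>}0$ small enough for Proposition~\ref{thm:z-minus-y-star} to apply, obtain the associated time $t_1{>}0$, and for $T{>}t_1$ introduce the event
\[
{\cal C}_N\steq{def}\left\{\inf_{t\in[t_1,T]}\big(Z^N(t){-}Y_*^N(t)\big)\geq \eps N\right\},
\]
for which $\lim_{N\to+\infty}\P({\cal C}_N)=1$. Then I would argue pointwise on ${\cal C}_N$: if there were some $t\in[t_1,T]$ with $Y_*^N(t)>0$, then necessarily $Z^N(t)=0$, whence $Z^N(t){-}Y_*^N(t)={-}Y_*^N(t)\leq{-}1<\eps N$, contradicting the definition of ${\cal C}_N$. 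Hence $Y_*^N(t)=0$ for all $t\in[t_1,T]$ on ${\cal C}_N$, so that
\[
{\cal C}_N\subseteq\left\{Y_*^N(t)=0,\ \forall t\in[t_1,T]\right\},
\]
and the claimed convergence of probabilities follows.

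There is no real obstacle in this step; the corollary is essentially a restatement of Proposition~\ref{thm:z-minus-y-star} once one uses $Y_*^N\cdot Z^N\equiv 0$, exactly as Corollary~\ref{thm:z-is-null} is deduced from Proposition~\ref{thm:ystarminusz} in the overloaded regime. The only point worth mentioning is that the whole difficulty is hidden upstream, in controlling $Z^N{-}Y_*^N$ via the $M/M/1$ comparison of Proposition~\ref{thm:z-minus-y-star} (itself relying on Proposition~\ref{thm:majorize-y-ystar}); the present proof merely packages that estimate into the statement about $Y_*^N$ alone.
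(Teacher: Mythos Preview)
Your proof is correct and follows exactly the same approach as the paper: the corollary is obtained from Proposition~\ref{thm:z-minus-y-star} together with the mutual exclusivity of $\{Y_*^N(t){>}0\}$ and $\{Z^N(t){>}0\}$. The paper states this in a single sentence, whereas you spell out the inclusion ${\cal C}_N\subseteq\{Y_*^N(t)=0,\ \forall t\in[t_1,T]\}$ explicitly, but the argument is identical.
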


\begin{proof}
  The proof follows from the mutual exclusivity of the events $\{ Y_*^N(t) {>} 0 \}$ and $\{Z^N(t) {>} 0 \}$ and from Proposition~\ref{thm:z-minus-y-star}.
\end{proof}

We can now state the main result for the underloaded regime. 
\begin{theorem}[Underloaded Regime]\label{theostab}
  Under Condition~\eqref{eq:cond1}, there exists $t_1{>}0$ such that,  for any initial fluid state~\eqref{ScalingInit}, 
any limiting point $(y_{\infty}^*(t), y_{\infty}(t), z_{\infty}(t))$ of  the sequence $(\overline{X}^N(t))$ defined by Relation~\eqref{eqX} satisfies the following relations, for all $t{\geq}t_1$, 
$y^*_{\infty}(t){=}0$ and the differential equations 
  \begin{equation*}
  \begin{aligned}
    \dfrac{\diff y_{\infty}}{\diff t}(t) &= - (p \mu_{11} + (1-p) \mu_{01} ) y_{\infty}(t) + p \mu_{11} \\
    \dfrac{\diff z_{\infty}}{\diff t} (t) &=  - \mu_{02} z_{\infty}(t) - \mu_{01} y_{\infty}(t) + \mu_{02} r
  \end{aligned}
  \end{equation*}
  hold. 
Furthermore, 
  \begin{equation*}
\lim_{t\to+\infty}  (y_{\infty}(t),z_{\infty}(t))=\left(\dfrac{p \mu_{11}}{p \mu_{11}{+}(1{-}p) \mu_{01}},  r{-}\dfrac{p\mu_{01}\mu_{11}}{\mu_{02}(p \mu_{11}{+}(1{-}p)\mu_{01})}\right). 
  \end{equation*}
\end{theorem}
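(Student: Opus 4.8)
The plan is to mirror the proof of Theorem~\ref{theoinstab}, the boundary $\{z{=}0\}$ now playing the role previously played by $\{y_*{=}0\}$. The first, and only delicate, step is to localize the trajectory away from the dangerous boundary: by Corollary~\ref{thm:y-star-is-null} together with Proposition~\ref{thm:z-minus-y-star}, I would fix $\eps_0{>}0$ and produce a time $t_1{\geq}0$, \emph{independent of the initial fluid state}~\eqref{ScalingInit}, such that for every $T{\geq}t_1$ the event
\[
{\cal C}_N\steq{def}\bigl\{Y_*^N(t){=}0\ \text{ and }\ Z^N(t){\geq}\eps_0 N\ \text{ for all } t\in[t_1,T]\bigr\}
\]
has probability converging to $1$ as $N{\to}+\infty$; indeed $Z^N(t){\geq}\eps_0 N$ follows from Proposition~\ref{thm:z-minus-y-star} and $Y_*^N(t){\geq}0$, and then $Y_*^N(t){=}0$ is forced by the state-space constraint $y_*\!\cdot z{=}0$.

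On ${\cal C}_N$ the indicators in the evolution equations collapse on $[t_1,T]$: one has $\overline Y_*^N(s){=}0$, $\ind{\overline Y_*^N(s){>}0}{=}0$ and $\ind{\overline Z^N(s){>}0}{=}1$ there. Fix a subsequence $(N_k)$ realizing a given limiting point $(y_\infty^*,y_\infty,z_\infty)$ of $(\overline X^N)$; since $N_k^{-1}(Y^{N_k}(t_1),Z^{N_k}(t_1))$ lies in $[0,1]{\times}[0,C_2^{N_k}/N_k]$, after a further extraction it converges to some $(y_b^0,z_b^0){\in}[0,1]{\times}[0,r]$ (one may condition on ${\cal F}_{t_1}$ if this limit is to be treated as random). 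Because $\P({\cal C}_N){\to}1$, Corollary~\ref{thm:y-star-is-null} gives $y_\infty^*(t){=}0$ on $[t_1,T]$, and letting $N{\to}+\infty$ in the integrated equations~\eqref{eq:integr-y} and~\eqref{eq:integr-z} --- using that the martingales there vanish in the limit (proof of Proposition~\ref{proptight}), dominated convergence for the indicator terms, and ${C_2^N}/{N}\to r$ --- yields, for $t\in[t_1,T]$ and hence for all $t{\geq}t_1$ by letting $T{\to}+\infty$, the two differential equations of the statement. Equivalently, and closer to the proof of Theorem~\ref{theoinstab}, one can couple $(Y^N,Z^N)$ on $[t_1,T]$ with the auxiliary process $(Y_b^N,Z_b^N)$ of Section~\ref{sec:auxiliary-free} and apply Proposition~\ref{thm:blocZ}, the Skorokhod reflection in~\eqref{eq:Sko2} being inactive on $[t_1,T]$ since ${\cal C}_N$ keeps $\overline Z^N$ above $\eps_0$.

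It remains to identify the long-time behaviour of any solution of this differential system. The first equation is a stable scalar linear ODE with explicit solution of the form~\eqref{eq:y-second-regime}, so $y_\infty(t)\to p\mu_{11}/(p\mu_{11}{+}(1{-}p)\mu_{01})$; substituting this into the second equation, a stable scalar linear ODE in $z_\infty$ forced by $y_\infty$, gives $z_\infty(t)\to r{-}p\mu_{01}\mu_{11}/(\mu_{02}(p\mu_{11}{+}(1{-}p)\mu_{01}))$, which is the limit already computed in Proposition~\ref{prop5} and can simply be quoted.

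The hard part is not the dynamics on $[t_1,T]$, which are those of an unreflected, essentially linear system, but securing the a priori localization $Z^N{>}0$ on $[t_1,T]$ with a threshold $t_1$ that does not depend on the initial fluid state: this is exactly the content of Propositions~\ref{thm:majorize-y-ystar} and~\ref{thm:z-minus-y-star}, whose proofs exploit Condition~\eqref{eq:cond2} via comparisons with ergodic $M/M/1$ queues. Once that estimate is available, the remaining steps are routine.
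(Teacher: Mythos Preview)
Your proposal is correct and follows essentially the same route as the paper: the paper's proof of Theorem~\ref{theostab} is a two-line reduction to the proof of Theorem~\ref{theoinstab}, using Proposition~\ref{thm:z-minus-y-star} (hence Corollary~\ref{thm:y-star-is-null}) to localize in $\{Z^N\geq \eps_0 N\}$ on $[t_1,T]$, coupling there with the auxiliary process $(Y_b^N,Z_b^N)$ of Section~\ref{sec:auxiliary-free}, and invoking Propositions~\ref{thm:blocZ} and~\ref{prop5}. Your write-up supplies exactly this argument with more detail (and an equivalent alternative via passing to the limit directly in~\eqref{eq:integr-y}--\eqref{eq:integr-z}); nothing substantive differs.
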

\begin{proof}
  In the same way as in the proof of Theorem~\ref{theoinstab}, a coupling between the processes  $(Y_{b}^N(t),Z_{b}^N(t))$ defined in Section~\ref{sec:auxiliary-free} and the process  $(Y_N(t),Z_N(t))$ can be constructed so that the convergence results of Proposition~\ref{prop5} can be used. 
  \end{proof}

\providecommand{\bysame}{\leavevmode\hbox to3em{\hrulefill}\thinspace}
\providecommand{\MR}{\relax\ifhmode\unskip\space\fi MR }
% \MRhref is called by the amsart/book/proc definition of \MR.
\providecommand{\MRhref}[2]{%
  \href{http://www.ams.org/mathscinet-getitem?mr=#1}{#2}
}
\providecommand{\href}[2]{#2}


\begin{thebibliography}{10}

\bibitem{Abg2015}
Xavier Allamigeon, Vianney B{\oe}uf, and St{\'e}phane Gaubert,
  \emph{Performance evaluation of an emergency call center: tropical polynomial
  systems applied to timed {P}etri nets}, FORMATS'15, Lecture Notes in Computer
  Science, vol. 9268, Springer, 2015.

\bibitem{Abg2016}
\bysame, \emph{Stationary solutions of discrete and continuous {P}etri nets
  with priorities}, Performance Evaluation (2017).

\bibitem{Anderson}
Robert~F. Anderson and Steven Orey, \emph{Small random perturbation of
  dynamical systems with reflecting boundary}, Nagoya Math. J. \textbf{60}
  (1976), 189--216.

\bibitem{Balsamo}
Simonetta Balsamo, \emph{Queueing networks with blocking: Analysis, solution
  algorithms and properties}, pp.~233--257, Springer Berlin Heidelberg, Berlin,
  Heidelberg, 2011.

\bibitem{best2013structure}
Eike Best and Harro Wimmel, \emph{Structure theory of {P}etri nets},
  Transactions on {P}etri Nets and Other Models of Concurrency VII, Springer,
  2013, pp.~162--224.

\bibitem{Billingsley}
P.~Billingsley, \emph{Convergence of probability measures}, second ed., Wiley
  Series in Probability and Statistics: Probability and Statistics, John Wiley
  \& Sons Inc., New York, 1999, A Wiley-Interscience Publication.

\bibitem{Chen}
H.~Chen and A.~Mandelbaum, \emph{Discrete flow networks: bottleneck analysis
  and fluid approximations}, Mathematics of Operation Research \textbf{16}
  (1991), no.~2, 408--446.

\bibitem{cohen1995asymptotic}
Guy Cohen, St\'ephane Gaubert, and Jean-Pierre Quadrat, \emph{Asymptotic
  throughput of continuous timed {P}etri nets}, 34th Conference on Decision and
  Control (New Orleans), Dec 1995.

\bibitem{Elkaroui}
N.~El~Karoui and M.~Chaleyat-Maurel, \emph{Temps locaux}, vol. 52-53, ch.~Un
  probl\`eme de r\'eflexion et ses applications au temps local et aux
  \'equations diff\'erentielles stochastiques sur ${\mathbb R}$, pp.~117--144,
  Soci\'et\'e Math\'ematique de France, 1978, Expos\'es du S\'eminaire J.
  Az\'ema-M. Yor, Paris, 1976--1977.

\bibitem{Harrison}
J.M. Harrison and M.I. Reiman, \emph{Reflected {B}rownian motion on an
  orthant}, Annals of Probability \textbf{9} (1981), no.~2, 302--308.

\bibitem{BKelly}
Frank~P. Kelly, \emph{Reversibility and stochastic networks}, John Wiley \&\
  Sons Ltd., Chichester, 1979, Wiley Series in Probability and Mathematical
  Statistics.

\bibitem{Kelly}
\bysame, \emph{Blocking probabilities in large circuit-switched networks},
  Advances in Applied Probability \textbf{18} (1986), 473--505.

\bibitem{Kingman}
J.~F.~C. Kingman, \emph{Poisson processes}, Oxford studies in probability,
  1993.

\bibitem{Koole}
Ger Koole and Avishai Mandelbaum, \emph{Queueing models of call centers: An
  introduction}, Annals of Operations Research \textbf{113} (2002), no.~1,
  41--59.

\bibitem{Liggett}
Thomas~M. Liggett, \emph{Ergodic theorems for the asymmetric simple exclusion
  process}, Transactions of the American Mathematical Society \textbf{213}
  (1975), 237--261.

\bibitem{BLiggett}
\bysame, \emph{Interacting particle systems}, Grundlehren der mathematischen
  Wissenschaften, Springer Verlag, New York, 1985.

\bibitem{murata1989petri}
Tadao Murata, \emph{{P}etri nets: Properties, analysis and applications},
  Proceedings of the IEEE \textbf{77} (1989), no.~4, 541--580.

\bibitem{Robert}
Philippe Robert, \emph{Stochastic networks and queues}, Stochastic Modelling
  and Applied Probability Series, vol.~52, Springer, New-York, June 2003.

\bibitem{Skorokhod}
A.V. Skorokhod, \emph{Stochastic equations for diffusion processes in a bounded
  region}, Theory Probab. Appl. \textbf{7} (1962), 3--23.

\bibitem{Watanabe}
Shinzo Watanabe, \emph{On discontinuous additive functionals and {L}\'evy
  measures of a {M}arkov process}, Japanese Journal of Mathematics \textbf{34}
  (1964), 53--70.

\end{thebibliography}
\end{document}